\newtheorem{proposition}{Proposition}
\newtheorem{theorem}{Theorem}
\theoremstyle{definition}
\newtheorem{example}{Example}
\newtheorem{definition}{Definition}
\newcommand{\R}{\mathbb{R}} 
\newcommand{\C}{\mathbb{C}} 
\newcommand{\half}{\tfrac{1}{2}} 
\newcommand{\hi}{\mathcal{H}} 
\newcommand{\lh}{\mathcal{L(H)}} 
\newcommand{\lhs}{\mathcal{L}_s(\hi)} 
\newcommand{\no}[1]{\left\|#1\right\|} 
\newcommand{\tr}[1]{{\rm tr}\left[#1\right]} 
\newcommand{\ran}{\textrm{ran}} 
\newcommand{\id}{\mathbbm{1}} 
\renewcommand{\rho}{\varrho}
\newcommand{\rank}[1]{\mathrm{rank}(#1)} 
\newcommand{\vr}{\mathbf{r}} 
\newcommand{\vsigma}{\boldsymbol{\sigma}} 
\newcommand{\M}[2]{\mathcal{M}_{#1,#2}} 
\newcommand{\Mr}[2]{\mathcal{M}^{row}_{#1,#2}} 
\newcommand{\Mall}{\mathcal{M}} 
\newcommand{\uleq}{\preceq} 
\newcommand{\nuleq}{\npreceq}
\newcommand{\Mo}{\mathsf{M}}
\newcommand{\No}{\mathsf{N}}
\newcommand{\psuc}{P_{\mathrm{succ}}} 
\begin{document}\setlength{\arraycolsep}{2pt}

\title[]{Communication of partial ignorance with qubits}

\author{Teiko Heinosaari and Oskari Kerppo}
\address{QTF Centre of Excellence, Turku Centre for Quantum Physics, Department of Physics and Astronomy, University of Turku, FI-20014 Turku, Finland}

\begin{abstract}
We introduce a class of communication tests where the task is to communicate partial ignorance by means of a physical system.
We present a full characterization of the implementations of these tests in the qubit case and partial results for qudits.
A peculiar observation is that two physical systems with the same operational dimensions may differ with respect to implementations of these tasks, as is shown to be the case for the qubit and rebit. Finally, we consider the natural question whether some of the communication tests are more difficult than others. A new preordering that we call the ultraweak matrix majorization is presented to answer this question in a theory-independent way.
\end{abstract}

\maketitle

\section{Introduction}

In any ordinary communication scenario one person (Alice) sends a physical object to another person (Bob). 
By observing, or measuring, the object Bob hopes to recover the message that Alice has encoded into the system. 
The possible messages correspond to different states of the system and thereby the measurement outcome obtained by Bob should depend on the encoded message.  
In the optimal case each state leads to a different measurement outcome. 
If that kind of measurement exists for a set of states, then the states are called distinguishable. 

The basic limitation of the qubit as a communication medium is that, although the qubit has infinitely many pure states, no more than two can be distinguishable.
This means that Alice can communicate to Bob a message reliably only if there are just two alternatives, for instance 'yes' and 'no'.
If Alice and Bob are instead using a communication with $n$ messages, then the total error probability is at least $1-2/n$. 
More generally, a $d$-dimensional quantum system has exactly $d$ distinguishable states and the error with $n>d$ messages is at least $1-d/n$. 
This simple but fundamental result is sometimes called the basic decoding theorem \cite{QPSI10}.
The maximal number of distinguishable states of a quantum system links directly to the respective Hilbert space dimension.
In fact, the maximal number can be taken as the definition of the operational dimension of any physical system \cite{Hardy03}.
This dimension makes sense also for hypothetical systems, such as the real Hilbert space qubit. 
The operational dimension of the real Hilbert space qubit is two, the same as the ordinary qubit.

In the present work we discuss a variant of the usual communication task. 
In this scenario, Alice is trying to transmit to Bob information about which of the alternatives Bob should not choose. We call this information partial ignorance and we will formulate the communication task as a specific kind of test. 
The success probability in the test then tells if the communication of partial ignorance has been successful or not; after a successful communication Bob should be able to act as efficiently as Alice.
Remarkably, the optimal communication of partial ignorance requires totally different kind of encoding and decoding setup than the usual communication. 

The possible ways to transmit partial ignorance with a quantum system are again linked to the respective Hilbert space dimension. 
We provide a full characterization of the solvable tests in the qubit case and derive certain solutions in the general qudit case. 
Interestingly, the so-called symmetric informationally complete (SIC) measurements \cite{Zauner99,ReBlScCa04} are useful in the communication task of partial ignorance. 
We will also show that with the real Hilbert space qubit one cannot communicate as much partial ignorance as with the ordinary qubit, even if the operational dimension is two for both of them. 
We present a systematic way of listing the communication tests that a physical system, existing or hypothetical, can be used to transmit.
This introduced communication table can be seen as refined notion of the operational dimension, the latter determining the diagonal of a communication table.
Finally, we introduce a preorder on the set of matrices, which we call ultraweak matrix majorization as it is weaker than the matrix majorization \cite{Dahl99} and weak matrix majorization \cite{PeMaSi05}. 
The ultraweak matrix majorization gives a theory independent classification of the communication tasks of partial ignorance. 

\section{Communication test of partial ignorance}\label{sec:ignorance}

We start by formulating the usual communication scenario as a test.
In this test there is an array of $n$ boxes, $n\geq 2$.
Charlie, organizing a communication test to Alice and Bob, picks randomly one box and hides a candy into it.
Alice and Bob are separated and Charlie tells the correct box to Alice but not to Bob.
Alice is allowed to send one qubit to Bob, who then has to open one box based on the information obtained by measuring the qubit. 
Alice and Bob can agree on the encoding and decoding of information before the test starts, but cannot alter it during the test.
Encoding corresponds to some fixed set of states, while decoding corresponds to some fixed measurement apparatus, which we call a measurement.
Alice and Bob pass the test if Bob makes no errors when the test is run repeatedly. 
In order to be successful, Alice must be able to communicate the index of the correct box to Bob with the transmitted qubit.
A qubit system has no more than two perfectly distinguishable states. 
For this reason, Alice can communicate the correct box to Bob without any error only when $n=2$.

We are then considering a variation of the previous test. 
The setting is otherwise the same, but now Charlie tells Alice one box where he will \emph{not} put the candy. 
Again, Alice is allowed to send one qubit to Bob and they can agree on the encoding and decoding of messages before the test starts. 
The aim of Alice and Bob is to maximize the probability of Bob opening the box which contains the candy.
The test is run several times so that Charlie can estimate the success probability of Alice and Bob.

Since even Alice does not know the location of the candy, we obviously cannot expect Bob always to open the box with the candy inside. 
For this reason, we must compare their success probability to the success probability in the case when Alice herself is trying to choose the correct box.
The best that Alice can do is to pick randomly one of the $n-1$ boxes, leaving aside the single box that she knows to be empty from what Charlie told to her.
Therefore, the success probability of Alice is $\tfrac{1}{n-1}$, and to this number we will compare the success probability of Alice and Bob when their communication is limited in some way, e.g. to one qubit passing from Alice to Bob.
A possible deviation in the maximal success probabilities in these two scenarios reveals a limitation of the communication medium in question.

There is one important detail in the game that needs to emphasized. 
When Charlie tells the location of one empty box to Alice, the rules of the test don't demand him to have already chosen the location of the candy; he simply must single out one box where he will not put the candy.
Charlie is even allowed to observe what Alice does before he puts the candy in one of the boxes and gives it to Bob, and he can use this information to minimize the success probability of Alice and Bob.

Before we can study whether Alice and Bob can succeed in the test we need to understand what Alice and Bob need to aim for.
We denote by $p(j|i)$ the probability that Bob chooses the box $j$ when Charlie has revealed the box $i$ to be empty.
These conditional probabilities are the variables that Alice and Bob try to control by adjusting the encoding and decoding of messages.
We write these probabilities in a matrix form, $C_{ij} = p(j|i)$, and call any such matrix as a \emph{communication matrix}.
Communication matrices hence correspond to row-stochastic matrices, i.e., matrices with non-negative entries
with each row summing to 1.
If Bob never opens a box that Charlie told to be empty, then the diagonal of the  matrix $C$ is zero, $C_{11}=C_{22}=\cdots= C_{nn}=0$.
We do not, however, set this as a requirement for allowed communication matrices as Alice and Bob only aim to maximize their success probability and other details are irrelevant.  
As we will shortly conclude, the optimal communication matrix is determined to have zeros in the diagonal, but this will be a consequence rather than assumption.
Summarizing the previous discussion, we conclude that the relevant strategy and actions of Alice and Bob are fully described by a communication matrix $C$.

For a communication matrix $C$, we denote by $\psuc(C)$ the success probability that $C$ leads to when Charlie tries to prevent Alice and Bob to find the candy, within the rules of the test.
Charlie cannot put the candy to the box that he said to be empty, but he can put it to any other box. 
Charlie can also freely choose which box he declares to be empty.
This can be beneficial if he learns, for instance, that Alice and Bob are using a preplanned strategy that makes them to always fail when Charlie tells the first box to be empty.
As Charlie can control the locations of the chosen empty box and the candy, we have
\begin{align}
\psuc(C) = \min \{ C_{ij} : i \neq j \} \, .
\end{align}
The unique optimal communication matrix, denoted as $C^{\mathrm{opt}}_n$, is therefore the $n\times n$ matrix
\begin{equation}\label{eq:optimal-n}
C^{\mathrm{opt}}_n = \frac{1}{n-1} \left[\begin{array}{ccccc} 0 & 1 & 1 & \cdots & 1\\1 & 0 & 1 & \cdots & 1 \\ 1 & 1  & 0 & & 1 \\ \vdots & &   & \ddots  \\ 1 & \cdots & \cdots & 1 & 0 \end{array}\right] \, .
\end{equation}
We remark that the optimal communication matrix is the same if Charlie chooses the index of the empty box with the uniform probability but is still otherwise freely controlling the location of the candy.
In that case, the success probability is given as
\begin{align}
\psuc'(C) = \tfrac{1}{n} \sum_{i=1}^n \min_j \{ C_{ij} : i \neq j \} \, .
\end{align}
This quantification leads to different numerical values than $\psuc$, but the unique optimal communication matrix is still $C^{\mathrm{opt}}_n$.
We conclude that \emph{Alice and Bob can pass the communication test if and only if they can implement the communication matrix $C^{\mathrm{opt}}_n$}.

\section{Qudit implementation and uniformly antidistinguishable states}\label{sec:qudit}

A physical implementation of a communication matrix $C$ means that Alice is using some states to encode messages and Bob is then performing a measurement on the system that Alice sends to him.
We say that a communication matrix $C$ has a qudit implementation if there are $n$ qudit states $\varrho_1,\ldots,\varrho_n$ and a measurement $\Mo$ (i.e. POVM) with outcomes $1,\ldots,n$ such that
\begin{align}
C_{ij} = \tr{ \varrho_i \Mo(j) } \, .
\end{align}
In particular, the optimal communication matrix $C^{\mathrm{opt}}_n$ has a qudit implementation if there are qudit states $\varrho_1,\ldots,\varrho_n$ and a measurement $\Mo$ such that
\begin{align}\label{eq:anti-uni}
\forall i \neq j: \quad \tr{\varrho_i \Mo(j)} = \tfrac{1}{n-1} \, .
\end{align}
Due to the normalization of $\Mo$ (i.e. $\sum_j \Mo(j)=\id$), the condition \eqref{eq:anti-uni} implies that 
\begin{align}\label{eq:anti-1}
\forall j: \quad \tr{\varrho_j \Mo(j)} = 0 \, ,
\end{align}
hence the condition \eqref{eq:anti-uni} indeed corresponds to the optimal communication matrix $C^{\mathrm{opt}}_n$, defined in \eqref{eq:optimal-n}.

At this point, we recall that quantum states $\varrho_1,\ldots,\varrho_n$ are called \emph{antidistinguishable} \cite{Leifer14,HeKe18,BaJaOpPe14,CaFuSc02,Molina19} if there exists a measurement $\Mo$ with outcomes $1,\ldots,n$ such that \eqref{eq:anti-1} holds. 
Motivated by this, we say that states $\varrho_1,\ldots,\varrho_n$ are \emph{uniformly antidistinguishable} if there exists a measurement $\Mo$ with outcomes $1,\ldots,n$ such that \eqref{eq:anti-uni} holds. 
The question whether Alice and Bob can pass the communication test with $n$ boxes by using a qudit system as a communication medium therefore reduces to the question if there exists a set of $n$ uniformly antidistinguishable qudit states. 
We recall that already a qubit has collections of $n$ antidistinguishable pure states for any $n\geq 2$ \cite{HeKe18}.
This is, however, not enough to solve the task as the uniform antidistinguishability is a stricly stronger condition than antidistinguishability.
In the following example we present a set of antidistinguishable states that are not uniformly antidistinguishable.

\begin{example}
\emph{A collection of states can be antidistinguishable without being uniformly antidistinguishable.} 
To see this, consider four qubit states 
\begin{align*}
\varrho_{1} = \half (\id + \sigma_x) \, , \quad \varrho_{2} =   \half (\id - \sigma_x)\, , \\
\varrho_{3} = \half (\id + \sigma_y)  \, , \quad \varrho_{4} = \half (\id - \sigma_y)  \, . 
\end{align*}
Any measurement $\Mo$ that satisfies \eqref{eq:anti-1} for these states must be of the form
\begin{align*}
\Mo(1) = m_1(\id - \sigma_x) \, , \quad \Mo(2) = m_2 (\id + \sigma_x)\, , \\
\Mo(3) = m_3 (\id - \sigma_y)  \, , \quad \Mo(4) = m_4 (\id + \sigma_y)  \, ,
\end{align*}
for some real numbers $m_j \geq 0$ summing to $1$, and any such choice (e.g. $m_1=m_2=m_3=m_4=\tfrac{1}{4}$) gives a valid solution, thereby showing that the four states are antidistinguishable. 
However, it is not possible to choose the numbers $m_j$ such that the uniform antidistinguishable condition \eqref{eq:anti-uni} is satisfied. 
Namely, if one calculates the outcome probabilities for the states $\varrho_1,\ldots,\varrho_4$ and the measurement $\Mo$, the resulting communication matrix $C$ is as follows:
$$
C=\left[\begin{array}{cccc}
 0 & 2m_2 & m_3 & m_4 \\
2m_1 & 0 & m_3 & m_4 \\
m_1 & m_ 2& 0 & 2m_4 \\
m_1 & m_2 & 2m_3 & 0 \\
\end{array}\right]
$$
It is then clear that there is no choice of $m_1,\ldots,m_4$ that would make $C=C^{\mathrm{opt}}_4$, implying that the states $\varrho_1,\ldots,\varrho_4$ are not uniformly antidistinguishable. 
\end{example}

In the following we demonstrate that there exists a set of uniformly antidistinguishable qudit states for every $n=2,\ldots,d$ but does not exist for any $n > d^2$.

\begin{example}\label{ex:below-d}
From $n$ distinguishable states one can obtain $n$ uniformly antidistinguishable states by forming suitable mixtures. 
In particular, a qudit has collections of $n$ uniformly antidistinguishable states for every $n=2,\ldots,d$.
To see this, let $\varrho_1,\ldots,\varrho_n$ be states such that $\tr{\varrho_j \Mo(k)}=\delta_{jk}$ for some measurement $\Mo$.
We define new states $\varrho'_1,\ldots,\varrho'_n$ as 
\begin{align}
\varrho'_i = \frac{1}{n-1} \sum_{j: j\neq i} \varrho_j \, .
\end{align}
Then
\begin{align}
\tr{\varrho'_i\Mo(j)} = \tfrac{1}{n-1}
\end{align}
for every $i \neq j$, therefore the states $\varrho'_1,\ldots,\varrho'_n$ are uniformly antidistinguishable.
\end{example}

\begin{proposition}\label{prop:max}
There can be at most $d^2$ uniformly antidistinguishable qudit states.
\end{proposition}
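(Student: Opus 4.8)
The plan is to count dimensions in the real vector space of self-adjoint operators on $\C^d$, which has dimension $d^2$. Suppose $\varrho_1,\ldots,\varrho_n$ are uniformly antidistinguishable via a measurement $\Mo$, so that $\tr{\varrho_i \Mo(j)} = \tfrac{1}{n-1}$ for all $i\neq j$ and hence $\tr{\varrho_j \Mo(j)} = 0$ for all $j$ by \eqref{eq:anti-1}. The key observation is that the condition \eqref{eq:anti-1}, namely $\tr{\varrho_j \Mo(j)}=0$ for a positive operator $\Mo(j)$ and a state $\varrho_j$, forces the ranges of $\varrho_j$ and $\Mo(j)$ to be orthogonal; in particular $\Mo(j)\neq 0$ for each $j$ (otherwise the corresponding row of the communication matrix would not sum to $1$), so we have $n$ nonzero positive operators $\Mo(1),\ldots,\Mo(n)$.

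First I would show these operators are linearly independent. Suppose $\sum_j c_j \Mo(j) = 0$ for real scalars $c_j$. Pairing with $\varrho_i$ and using $\tr{\varrho_i\Mo(j)} = \tfrac{1}{n-1}(1-\delta_{ij})$ gives, for each $i$,
\begin{align}
0 = \sum_j c_j \tr{\varrho_i \Mo(j)} = \tfrac{1}{n-1}\sum_{j\neq i} c_j = \tfrac{1}{n-1}\left( S - c_i \right)\, ,
\end{align}
where $S = \sum_j c_j$. Hence $c_i = S$ for every $i$, so all the $c_i$ are equal to a common value $c$, and then $S = nc = c$, forcing $c = 0$ (as $n\geq 2$). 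Thus $c_j = 0$ for all $j$, establishing that $\Mo(1),\ldots,\Mo(n)$ are linearly independent in $\lh_s$, the real space of self-adjoint operators, which has dimension $d^2$. Therefore $n \leq d^2$.

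The main obstacle is essentially the linear-independence argument above, but it turns out to be short because the Gram-type matrix of the pairings $\tr{\varrho_i\Mo(j)}$ has the rigid form $\tfrac{1}{n-1}(J - I)$ (with $J$ the all-ones matrix), which is invertible for $n\geq 2$; one only needs that the diagonal pairings $\tr{\varrho_j\Mo(j)}$ vanish and the off-diagonal ones are all equal and nonzero. A minor point to check is that no $\Mo(j)$ is the zero operator, which follows since each row of $C^{\mathrm{opt}}_n$ sums to $1$ and the entries in row $j$ other than the diagonal are $\tfrac{1}{n-1}$, forcing $\tr{\varrho_j \Mo(j)}$ together with the rest to total $1$ — consistent with $\Mo(j)\neq 0$ — so the count of $n$ linearly independent vectors is genuine. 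This yields the bound $n\leq d^2$.
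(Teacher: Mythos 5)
Your proof is correct, and it takes the dual route to the paper's. The paper argues on the side of the \emph{states}: it supposes $n>d^2$, extracts a basis $\{\varrho_i\}_{i=1}^\ell$ (with $\ell\leq d^2<n$) for the span of the states, expands some $\varrho_j$ with $j>\ell$ in this basis, and uses $\tr{\varrho_j\Mo(j)}=0$ together with $\tr{\varrho_j\Mo(k)}=\tfrac{1}{n-1}$ to force all expansion coefficients to equal $-1$ while also summing to zero, a contradiction. You instead argue on the side of the \emph{measurement operators} $\Mo(1),\ldots,\Mo(n)$, showing directly that any linear relation $\sum_j c_j\Mo(j)=0$ collapses to $c_j=0$ for all $j$ once you pair against each $\varrho_i$ and use the rigid Gram structure $\tr{\varrho_i\Mo(j)}=\tfrac{1}{n-1}(1-\delta_{ij})$; the bound $n\leq d^2$ then follows from $\dim\lhs=d^2$. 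Both are dimension counts in the same $d^2$-dimensional real space, but your version avoids the contradiction and the basis-extraction step, reducing everything to the invertibility of $\tfrac{1}{n-1}(J-I)$, which is arguably cleaner; the paper's version makes visible the symmetric fact that the \emph{states} themselves must also be linearly independent. One small remark: the side comments about orthogonality of ranges and about $\Mo(j)\neq 0$ are not actually needed for the linear-independence computation (which by itself already rules out any $\Mo(j)=0$), so you could safely drop them.
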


\begin{proof}
Let us make a counter assumption that $\varrho_1,\ldots,\varrho_n$, $n>d^2$, are uniformly antidistinguishable, with a measurement $\Mo$. 
We can arrange the states so that  $\{\varrho_i\}_{i=1}^\ell$ forms a basis for $span\{ \varrho_i \}_{i=i}^n$, so that  $\varrho_j = \sum_{i=1}^{\ell} \alpha_{i}^j \rho_i$ for every $1\leq j \leq n$ for some real numbers $\alpha_1^j,\ldots,\alpha_\ell^j$. 
Here $\ell \leq d^2$ as the dimension of $\lhs$ is $d^2$, and therefore $\ell < n$.

Fix $j$ such that $\ell<j \leq n$.
We then have
\begin{align*}
0 = \tr{\varrho_j \Mo(j)} = \sum_{i=1}^{\ell}\alpha_i^j \tr{\rho_i \Mo(j)} = \frac{1}{n-1}\sum_{i=1}^{\ell} \alpha_i^j,
\end{align*}
hence $\sum_{i=1}^{\ell} \alpha_i^j = 0$. 
For every $1\leq k \leq \ell$ we obtain
\begin{align*}
\frac{1}{n-1} &= \tr{\varrho_j \Mo(k)} = \sum_{i=1}^{\ell} \alpha_i^j \tr{\varrho_i \Mo(k)} = \frac{1}{n-1}(\sum_{i =1}^{\ell}\alpha_i^j - \alpha_k^j) \\
&= -\frac{1}{n-1}\alpha_k^j \, , 
\end{align*}
implying that $\alpha_k^j = -1$. This is a contradiction and therefore there cannot be more than $d^2$ uniformly antidistinguishable qudit states.
\end{proof}

The remaining question is whether a set of $n$ uniformly antidistinguishable qudit states exists for $d<n \leq d^2$.
We give a partial answer by relating this question to a suitable symmetry property of the respective measurement.

\begin{example}\label{ex:sic}
A \emph{symmetric informationally complete} (SIC) measurement is a measurement $\Mo$ with $d^2$ outcomes such that the linear span of $\ran \Mo$ is $\lh$, each operator $\Mo(j)$ is rank-1, $\tr{\Mo(j)}$ is constant for all $j$, and $\tr{\Mo(j)\Mo(k)}$ is constant for all $j\neq k$ \cite{Zauner99,ReBlScCa04}.
It can be shown that from this definition follows that for all $j\neq k$ we have
\begin{align}
& \tr{\Mo(j)}=1/d \, , \label{eq:SIC-1}\\
& \tr{\Mo(j)\Mo(k)}=1/(d^2(d+1)) \, . \label{eq:SIC-2}
\end{align}
It further follows that
\begin{align}
\tr{\Mo(j)^2}=\tr{\Mo(j)}^2=1/d^2  \, .\label{eq:SIC-3}
\end{align}
Let $\Mo$ be a fixed SIC measurement. 
For each $i=1,\ldots,d^2$, we define a state $\varrho_i$ as
\begin{equation}\label{eq:SIC-states}
\varrho_i = \tfrac{1}{d-1} (\id - d \Mo(i)) \, .
\end{equation}
Using the defining conditions \eqref{eq:SIC-1} and \eqref{eq:SIC-2}, we obtain
\begin{equation}
\tr{\varrho_i \Mo(i)}= \tfrac{1}{d-1} (\tr{\Mo(i)} - d \tr{\Mo(i)^2}) =0 
\end{equation}
and
\begin{equation}
\tr{\varrho_i \Mo(j)}= \tfrac{1}{d-1} (\tr{\Mo(j)} - d \tr{\Mo(i)\Mo(j)}) = \tfrac{1}{d^2-1} \, .
\end{equation}
In conclusion, the $d^2$ states $\varrho_1, \ldots,\varrho_{d^2}$ defined in \eqref{eq:SIC-states} are uniformly antidistinguishable. 
It is an open problem if a SIC measurement exists in every dimension $d$, but analytic and numerical solutions are known up to $d\sim 100$; see \cite{ApChFlWa18} and the references given there.
\end{example}

\begin{example}\label{ex:sym}
As a modification of the construction in the previous example, let us suppose that $\Mo$ is a measurement with $n \geq d$ outcomes such that each operator $\Mo(j)$ is rank-1, $\tr{\Mo(j)}$ is constant for all $j$, and $\tr{\Mo(j)\Mo(k)}$ is constant for all $j\neq k$.
We call this kind of measurement \emph{symmetric}.  
The defining requirements imply that 
\begin{align}
& \tr{\Mo(j)}= d/n \, , \label{eq:sym-1} \\
& \tr{\Mo(j)\Mo(k)}= \frac{dn-d^2}{n^2(n-1)} \, . \label{eq:sym-2}
\end{align}
For each $i=1,\ldots,n$, we define a state $\varrho_i$ as
\begin{equation}\label{eq:sym-states}
\varrho_i =  \tfrac{1}{d-1}( \id -  \tfrac{n}{d} \Mo(i)) .
\end{equation}
Using the defining conditions \eqref{eq:sym-1} and \eqref{eq:sym-2}, we obtain
\begin{equation}
\tr{\varrho_i \Mo(i)} = \tfrac{1}{d-1} (\tr{\Mo(i)} - \tfrac{n}{d} \tr{\Mo(i)^2}) = 0
\end{equation}
and
\begin{align*}
\tr{\varrho_i \Mo(j)} & = \tfrac{1}{d-1} (\tr{\Mo(j)} - \tfrac{n}{d} \tr{\Mo(i)\Mo(j)})  = \tfrac{1}{n-1} \, . 
\end{align*}
In conclusion, the $n$ states $\varrho_1, \ldots,\varrho_{n}$ are uniformly antidistinguishable. 
\end{example}

The previous results allow us to draw the following conclusion.

\begin{theorem}\label{thm:qubit}
There exists a set of $n$ uniformly antidistinguishable qubit states if and only if $n$ is $2,3$ or $4$.
\end{theorem}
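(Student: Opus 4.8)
The plan is to obtain the statement as a corollary of Proposition~\ref{prop:max} together with the explicit constructions of Examples~\ref{ex:below-d}, \ref{ex:sym} and \ref{ex:sic}, all specialized to $d=2$. For the ``only if'' direction there is essentially nothing to do: a communication test has $n\ge 2$ boxes, so $n\ge 2$, while Proposition~\ref{prop:max} applied with $d=2$ says that there can be at most $d^2=4$ uniformly antidistinguishable qubit states. Hence $n\in\{2,3,4\}$, which is exactly the claimed range.

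For the ``if'' direction I would exhibit a uniformly antidistinguishable family of qubit states in each of the three cases. For $n=2$, Example~\ref{ex:below-d} applies verbatim with $d=2$: from any two distinguishable qubit states one gets two uniformly antidistinguishable ones by interchanging them. For $n=3$, I would invoke Example~\ref{ex:sym} with the trine measurement, $\Mo(j)=\tfrac23\kb{\psi_j}{\psi_j}$ for $j=1,2,3$, where the Bloch vectors $\vr_1,\vr_2,\vr_3$ are coplanar unit vectors at mutual angle $2\pi/3$; since $\sum_j\vr_j=\vnull$ one checks that the three rank-one effects sum to $\id$, that $\tr{\Mo(j)}=\tfrac23=d/n$ is constant, and that $\tr{\Mo(j)\Mo(k)}$ is constant over $j\neq k$, so $\Mo$ is symmetric in the sense of Example~\ref{ex:sym} and that example produces three uniformly antidistinguishable qubit states. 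For $n=4$, I would invoke Example~\ref{ex:sic} with a qubit SIC measurement; such a measurement exists (the four effects $\Mo(j)=\tfrac12\kb{\psi_j}{\psi_j}$ whose Bloch vectors are the vertices of a regular tetrahedron, with $\vr_i\cdot\vr_j=-\tfrac13$ for $i\neq j$), so Example~\ref{ex:sic} produces four uniformly antidistinguishable qubit states.

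I do not expect a genuine obstacle here; the theorem is a packaging of the earlier results, and the only things to verify are the standard facts that a three-outcome symmetric qubit measurement and a qubit SIC measurement exist, plus the routine check that the trine satisfies the hypotheses of Example~\ref{ex:sym}. If one prefers not to quote the existence of a qubit SIC, an alternative for $n=4$ is to write the tetrahedron effects $\Mo(j)=\tfrac12\kb{\psi_j}{\psi_j}$ and the associated states $\varrho_j=\id-\kb{\psi_j}{\psi_j}$ explicitly and verify \eqref{eq:anti-uni} by a one-line Bloch-vector computation, namely $\tr{\varrho_i\Mo(j)}=\tfrac12\bigl(1-|\ip{\psi_i}{\psi_j}|^2\bigr)=\tfrac12\bigl(1-\tfrac{1+\vr_i\cdot\vr_j}{2}\bigr)$, which equals $0$ for $i=j$ and $\tfrac13$ for $i\neq j$; the case $n=3$ can likewise be done by hand with the effects $\Mo(j)=\tfrac13(\id-\vr_j\cdot\vsigma)$ for an equilateral Bloch triangle.
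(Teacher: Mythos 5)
Your proposal is correct and takes essentially the same route as the paper's proof: the paper likewise obtains the upper bound from Proposition~\ref{prop:max} with $d=2$, and exhibits uniformly antidistinguishable families for $n=2,3,4$ via Example~\ref{ex:below-d}, a symmetric (trine) qubit measurement as in Example~\ref{ex:sym}, and a qubit SIC as in Example~\ref{ex:sic}. (The paper also offers a second, self-contained proof entirely in the Bloch-ball picture, but your argument matches its primary one.)
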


The proof of this statement follows from our earlier observations.
By Prop. \ref{prop:max} there can be at most four antidistinguishable qubit states and by Example \ref{ex:below-d} there is a set of two antidistinguishable qubit states.
We use the method of Examples \ref{ex:sic} and \ref{ex:sym} to construct uniformly antidistinguishable sets of $3$ and $4$ states.
First, a qubit SIC measurement is
\begin{align*}
\Mo(1) &= \tfrac{1}{4} (\id + \tfrac{1}{\sqrt 3} (\sigma_x + \sigma_y + \sigma_z)) \, , \quad \Mo(2) = \tfrac{1}{4} (\id + \tfrac{1}{\sqrt 3} (\sigma_x - \sigma_y - \sigma_z))\, , \\
\Mo(3) &= \tfrac{1}{4} (\id + \tfrac{1}{\sqrt 3} (-\sigma_x + \sigma_y - \sigma_z)) \, , \quad \Mo(4) = \tfrac{1}{4} (\id + \tfrac{1}{\sqrt 3} (-\sigma_x - \sigma_y + \sigma_z))
\end{align*}
and this leads to four uniformly antidistinguishable states
\begin{align*}
\varrho_{1} &= \half (\id - \frac{1}{\sqrt 3}(\sigma_x + \sigma_y + \sigma_z)) \, , \quad \varrho_{2} =   \half (\id - \frac{1}{\sqrt 3}(\sigma_x - \sigma_y - \sigma_z))\, , \\
\varrho_{3} &= \half (\id - \frac{1}{\sqrt 3}(-\sigma_x + \sigma_y - \sigma_z))  \, , \quad \varrho_{4} = \half (\id - \frac{1}{\sqrt 3}(-\sigma_x - \sigma_y + \sigma_z) )  \, . 
\end{align*}
For $n=3$ we define a measurement $\No$ as
\begin{align*}
\No(1) &= \tfrac{1}{3} (\id +\sigma_x) \, , \quad \No(2) = \tfrac{1}{3} (\id - \tfrac{1}{2} \sigma_x + \tfrac{\sqrt{3}}{2} \sigma_y )\, , \\
\No(3) &= \tfrac{1}{3} (\id - \tfrac{1}{2} \sigma_x - \tfrac{\sqrt{3}}{2} \sigma_y )   \, . 
\end{align*}
It is straightforward to verify that $\No$ is symmetric, thereby giving rise to three uniformly antidistinguishable qubit states. 
The obtained states are
\begin{align*}
\varrho_{1} &= \half (\id - \sigma_x) \, , \quad \varrho_{2} =   \half (\id + \tfrac 12 \sigma_x - \tfrac{\sqrt 3}{2} \sigma_y )\, , \\
\varrho_{3} &= \half (\id + \tfrac 12 \sigma_x + \tfrac{\sqrt 3}{2} \sigma_y)    \, . 
\end{align*}

\begin{figure}
\centering
\begin{subfigure}{.5\textwidth}
\centering
\includegraphics[scale=.5]{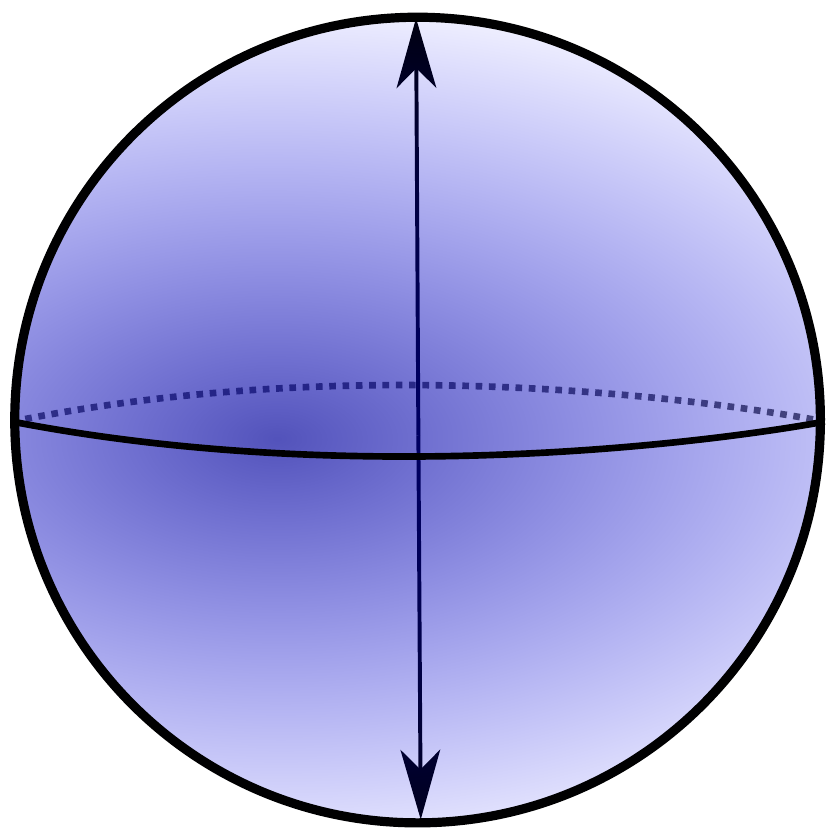}
\caption{$C^{\mathrm{opt}}_{2}$}
\label{fig:qubit_2-1}
\end{subfigure}%
\begin{subfigure}{.5\textwidth}
\centering
\includegraphics[scale=.5]{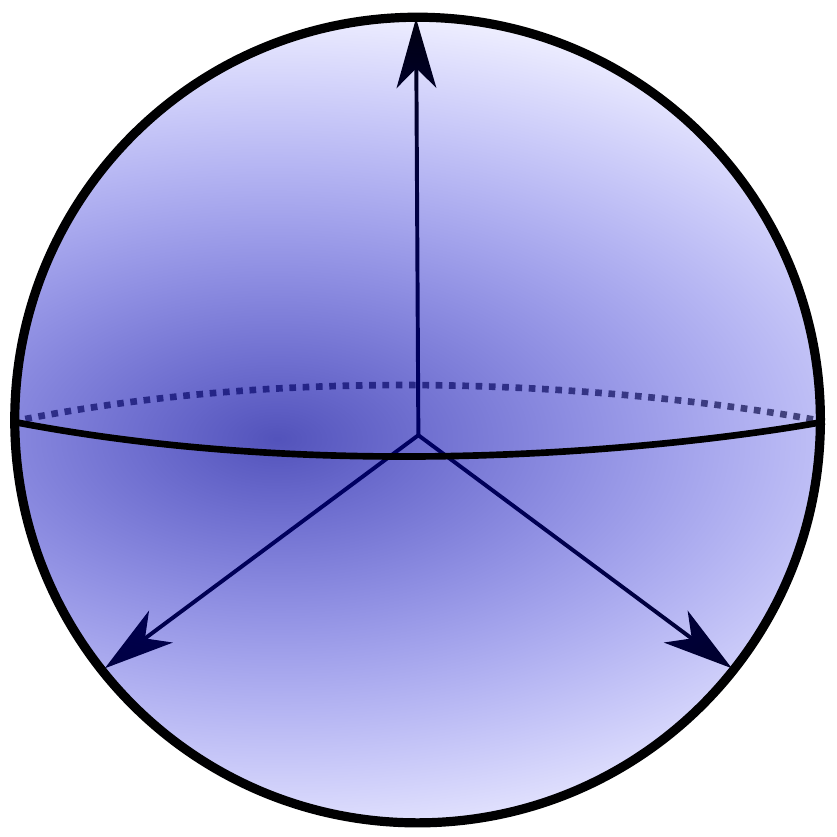}
\caption{$C^{\mathrm{opt}}_{3}$}
\label{fig:qubit_3-1}
\end{subfigure}
\begin{subfigure}{.5\textwidth}
\centering
\includegraphics[scale=.5]{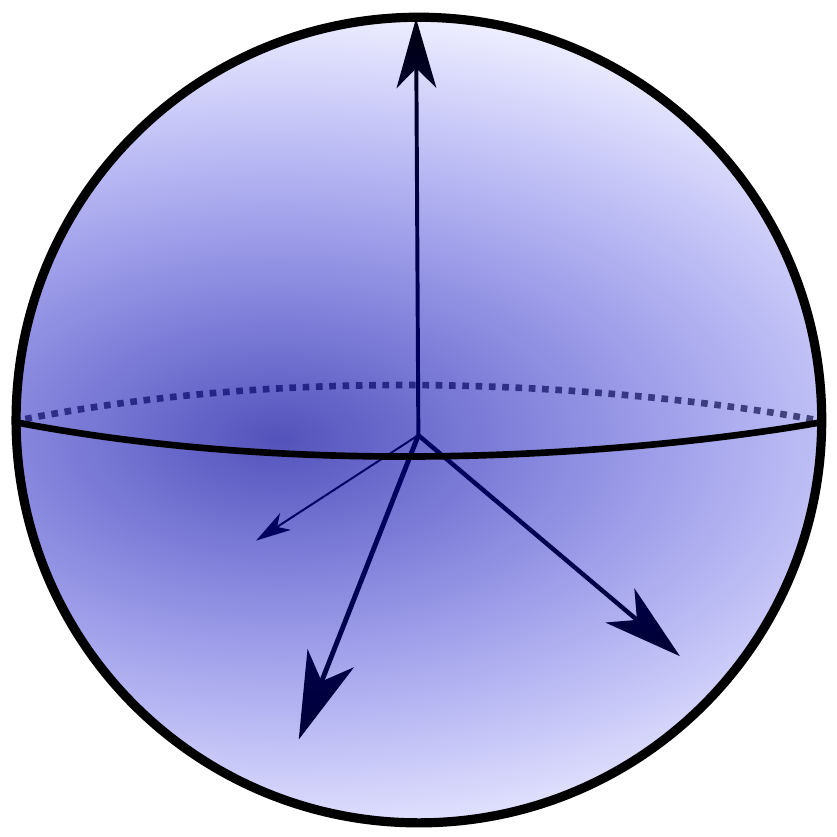}
\caption{$C^{\mathrm{opt}}_{4}$}
\label{fig:qubit_4-1}
\end{subfigure}
\caption{The sets of pure qubit states in the Bloch ball that implement the corresponding optimal communication matrices. 
The points in the ball form (a) a line segment, (b) a triangle and (c) a tetrahedron.
}
\label{fig:Bloch_sphere}
\end{figure}

With an additional inspection we can see that any uniformly antidistinguishable set of qubit states is a unitary transformation of one of the three previously listed sets.
This is easy to see in the Bloch ball picture, which leads to an alternative proof of Theorem \ref{thm:qubit}.
Let us first recall that antidistinguishable qubit states must be pure \cite{HeKe18}, namely, a mixed qubit state $\varrho$ gives $\tr{\varrho E}>0$ for any nonzero positive operator $E$, hence making the condition \eqref{eq:anti-1} impossible to satisfy.
We are thus considering $n$ pure qubit states $\varrho_1,\ldots,\varrho_n$, and we write them in the Bloch representation as $\varrho_j = \half (\id + \vr_j \cdot \vsigma)$, where $\vr_j \in \R^3$, $\no{\vr_j}=1$.
This set of states is antidistinguishable if and only if $\sum_j t_j\vr_j = 0$ for some real numbers $t_j>0$ satisfying $\sum_j t_j =2$ \cite{HeKe18}.
In this case, the antidistinguishing measurement is unique and given as 
\begin{equation}
\Mo(j)=\frac{t_j}{2}(\id - \vr_j \cdot \vsigma) \, .
\end{equation}
A direct calculation using this explicit form of $\Mo$ shows that the stronger condition \eqref{eq:anti-uni} of uniform antidistinguishability is equivalent to
\begin{align}\label{eq:qubit-uni}
\forall j\neq k: \quad t_k( 1-\vr_j \cdot \vr_k) = \frac{2}{n-1}  \, .
\end{align}
This condition implies that 
\begin{align}
t_k( 1-\vr_j \cdot \vr_k) =  t_j( 1-\vr_k \cdot \vr_j) 
\end{align}
for all $j\neq k$.
But since $\vr_j \cdot \vr_k = \vr_k \cdot \vr_j$, this means that either $t_j=t_k$ or $\vr_j \cdot \vr_k=1$.
The latter equality cannot hold as $\vr_j \neq \vr_k$.
Therefore, we conclude that $t_1 = t_2 = \cdots = t_n$, implying that $t_1 = \cdots = t_n=2/n$.
Using this fact the condition \eqref{eq:qubit-uni} takes the form
\begin{align}\label{eq:qubit-sym-23}
\forall j\neq k: \quad \vr_j \cdot \vr_k = \frac{1}{1-n}  \, .
\end{align}
In particular, all the inner products between two different Bloch vectors must be equal and fixed by $n$.
For a chosen integer $n=2,3,4$, the set of states is therefore unique up to a unitary transformation.
These constellations are depicted in Fig. \ref{fig:Bloch_sphere}.

\section{Generalized communication tests}\label{sec:gen}

An obvious generalization of the previous tests is that Charlie reveals $t$ of the $n-1$ empty boxes to Alice, where $t$ is a fixed integer between $1$ and $n-1$.
The number of revealed boxes is agreed before the test starts and Alice and Bob can choose their encoding and decoding accordingly.
The first test explained in the beginning of Sec. \ref{sec:ignorance} corresponds to $t=n-1$ and then Alice knows exactly where the candy is, while the test described after, the one that connected to $C^{\mathrm{opt}}_n$, corresponds to $t=1$. 
 The success probability for finding the candy for Alice alone is $\tfrac{1}{n-t}$, and we want to know if Alice and Bob can reach this success probability when their communication is limited to some physical system, e.g. qudit, from Alice to Bob.

For Alice to be able to send  to Bob the full information that she learns from Charlie, they must use an encoding that has $\binom{n}{t}$ labels, one label for each possible $t$ tuple.
It is convenient to use the ordered tuples $(i_1,\ldots,i_t)$, $1\leq i_1 < \cdots < i_t \leq n$, and we denote by $\Omega^{<}_{n;t}$ the set of all these ordered tuples. 
In this notation, the information that Alice gets from Charlie is an order tuple $(i_1,\ldots,i_t)\in\Omega^{<}_{n;t}$, specifying the boxes which Charlie will keep empty.
The condition for the optimal communication is therefore
\begin{equation}\label{eq:popt-nt}
  p(j|(i_1,\ldots,i_t)) =\begin{cases}
    0, & \text{if $i\in \{ i_1,\ldots,i_t \} $}.\\
    \tfrac{1}{n-t}, & \text{otherwise}.
  \end{cases}
\end{equation}
The reason for this optimality condition is analogous to the discussion of Sec. \ref{sec:ignorance}; Bob must avoid the known empty locations and choose from other locations with the uniform probability.

To write the optimal communication probabilities as a matrix, we use the lexicographic ordering in $\Omega^{<}_{n;t}$ and label the elements of $\Omega^{<}_{n;t}$ by integers from $1$ to $\binom{n}{t}$. 
For instance, in the case of $n=4$, $t=2$, we use indices
\begin{align*}
1 \leftrightarrow (1,2) \qquad 2 \leftrightarrow (1,3) \qquad 3 \leftrightarrow (1,4) \\
4 \leftrightarrow (2,3) \qquad 5 \leftrightarrow (2,4) \qquad 6 \leftrightarrow (3,4) \\
\end{align*}
We denote by $C^{\mathrm{opt}}_{n,t}$ the optimal communication matrix of the test, thereby having elements $[C^{\mathrm{opt}}_{n,t}]_{ij} = p(j|i)$, where $p$ is the conditional probability distribution of \eqref{eq:popt-nt} written in the previously explained indexing. 
For instance, 
$$
C^{\mathrm{opt}}_{4,2} = \frac{1}{2} \left[\begin{array}{cccc} 0 & 0 & 1 & 1\\ 0 & 1 & 0 & 1\\ 0 & 1 & 1 & 0\\ 1 & 0 & 0 & 1\\ 1 & 0 & 1 & 0\\ 1 & 1 & 0 & 0  \end{array}\right] \, .
$$
Generally, $C^{\mathrm{opt}}_{n,t}$ is $\binom{n}{t} \times n$ matrix and it can be written as follows. 
The first row is
\begin{equation}\label{eq:firstrow}
\tfrac{1}{n-t} \left[\begin{array}{cccccc} 0 & \cdots & 0 & 1 & \cdots 1  \end{array}\right]
\end{equation}
with $t$ zeros.
The other rows are all possible permutations of this row that give a different row.
The order of the rows is irrelevant and we come back to that point in Sec. \ref{sec:ultraweak}.
This notation generalizes the earlier notation for $C^{\mathrm{opt}}_{n}$ introduced in Sec. \ref{sec:ignorance} and we have $C^{\mathrm{opt}}_{n}\equiv C^{\mathrm{opt}}_{n,1}$.
 
A qudit implementation of $C^{\mathrm{opt}}_{n,t}$ means that there exist $k=\binom{n}{t}$ qudit states $\varrho_{1},\ldots,\varrho_{k}$ and a measurement $\Mo$ with the outcome set $\{1,\ldots,n \}$ such that
\begin{align}\label{eq:cntopt}
[C^{\mathrm{opt}}_{n,t}]_{ij} = \tr{\varrho_{i}\Mo(j)} \, .
\end{align}
for all $i=1,\ldots,k$ and $j=1,\ldots,n$.

Our first observation is that, other than the cases listed in Theorem \ref{thm:qubit}, the previously described tests cannot be implemented with a qubit system. 

\begin{proposition}\label{prop:noqubit}
The optimal communication matrix $C^{\mathrm{opt}}_{n,t}$ has no qubit realization for any $t\geq 2$.
\end{proposition}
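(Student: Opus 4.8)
The plan is to exploit the fact that every row of $C^{\mathrm{opt}}_{n,t}$ contains $t\ge 2$ zeros, and to show that in the qubit case this forces the measurement to be degenerate. Assume for contradiction that, for some $t\ge 2$ (so necessarily $n\ge t+1\ge 3$), there are qubit states $\varrho_1,\dots,\varrho_k$ with $k=\binom{n}{t}$ and a measurement $\Mo$ with outcomes $1,\dots,n$ satisfying \eqref{eq:cntopt}. Write $Z_i\subseteq\{1,\dots,n\}$ for the set of $t$ columns in which row $i$ vanishes, i.e. the forbidden boxes of the $i$-th tuple. Two preliminary observations are needed. First, every $\Mo(j)$ is nonzero: for each $j$ there is a $t$-tuple avoiding $j$ (this uses $n-1\ge t$), so column $j$ has an entry equal to $\tfrac{1}{n-t}>0$, whence $\tr{\varrho_i\Mo(j)}>0$ for the corresponding row $i$. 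Second, fix a row $i$ and an index $\ell\in Z_i$; from $\tr{\varrho_i\Mo(\ell)}=0$ together with $\Mo(\ell)\neq 0$ it follows, exactly as in the discussion after Theorem \ref{thm:qubit}, that $\varrho_i$ must be pure, say $\varrho_i=\kb{\psi_i}{\psi_i}$, and hence that $\Mo(\ell)\ket{\psi_i}=0$; since a nonzero positive operator on a two-dimensional space has kernel of dimension at most one, this forces $\Mo(\ell)=c_{i,\ell}\kb{\psi_i^\perp}{\psi_i^\perp}$ with $c_{i,\ell}>0$, where $\ket{\psi_i^\perp}$ spans the orthocomplement of $\ket{\psi_i}$. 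Because $t\ge 2$, this already shows that all operators $\{\Mo(\ell):\ell\in Z_i\}$ are positive multiples of one and the same rank-one projection.

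The heart of the argument is then a short combinatorial step upgrading this to the statement that all of $\Mo(1),\dots,\Mo(n)$ are proportional to a single rank-one projection. The first row of $C^{\mathrm{opt}}_{n,t}$ has forbidden set $\{1,2,\dots,t\}$, so $\Mo(1),\dots,\Mo(t)$ are all positive multiples of one rank-one projection $P$. Given any column $j>t$, the sorted tuple $(2,3,\dots,t,j)$ lies in $\Omega^{<}_{n;t}$ and its forbidden set contains both $2$ and $j$; by the second observation $\Mo(2)$ and $\Mo(j)$ are both positive multiples of that row's rank-one projection, and since $\Mo(2)$ is a nonzero multiple of $P$, so is $\Mo(j)$. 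Hence every $\Mo(j)$ is a nonnegative multiple of $P$, so $\id=\sum_{j=1}^{n}\Mo(j)$ is a multiple of $P$ — impossible, since $\id$ has rank $2$ while $P$ has rank $1$. This contradiction proves the proposition.

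I expect the only delicate points to be bookkeeping rather than conceptual: one must check that the auxiliary tuples genuinely lie in $\Omega^{<}_{n;t}$ (which is exactly where $t\ge 2$ and $n\ge t+1$ enter — the tuple $(2,\dots,t,j)$ has $t-1\ge 1$ entries besides $j$, all distinct and in $\{1,\dots,n\}$), and that the chain of proportionalities is truly transitive, which works because each $\Mo(j)$ is nonzero and so has a well-defined one-dimensional support. If a more hands-on presentation is preferred, the same argument runs in the Bloch representation: writing $\varrho_i=\half(\id+\vr_i\cdot\vsigma)$ and $\Mo(j)=b_j(\id+\vm_j\cdot\vsigma)$ with $b_j\ge 0$, the zero conditions force $\vr_i$ to be a unit vector and $\vm_\ell=-\vr_i$ for every $\ell\in Z_i$, hence (by the same connectivity step) all $\vm_j$ coincide and are unit vectors, and then $\sum_j\Mo(j)=\id$ forces this common vector to vanish, a contradiction.
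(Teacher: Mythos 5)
Your proof is correct, and the contradiction you obtain (all effects proportional to one rank-one projection, which cannot sum to $\id$) is watertight. However, you take a different, and somewhat longer, route than the paper. The paper's argument never leaves the single effect $\Mo(1)$: it observes that the first two rows both have a zero in column one, so two \emph{distinct} states $\varrho_1\neq\varrho_2$ both satisfy $\tr{\varrho_i\Mo(1)}=0$; since $\Mo(1)$ is nonzero and positive on $\C^2$ with nontrivial kernel, the kernel is one-dimensional and both states must be the (unique) pure state supported on it, a contradiction. You instead fix one state and use $t\geq 2$ to show it annihilates at least two effects, deduce that those effects are proportional to the projection onto the state's orthocomplement, and then propagate this by a connectivity argument over the rows (using the auxiliary tuples $(2,\dots,t,j)$) to conclude that \emph{all} the $\Mo(j)$ are proportional to a single rank-one projection. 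Both proofs exploit $t\geq 2$, but dually: the paper uses two states against one effect, you use one state against two effects and then a global step. The paper's route is shorter; yours extracts more structural information (a complete description of what the measurement would have to look like), which may be illuminating even if it is not strictly needed to reach the contradiction.
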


\begin{proof}
Let $t\geq 2$, so that $n\geq t+1=3$.
We assume that $C^{\mathrm{opt}}_{n,t}$ is given by qubit states $\varrho_{1},\ldots,\varrho_{k}$ and a qubit measurement $\Mo$ via \eqref{eq:cntopt}.
We first observe that the rows of $C^{\mathrm{opt}}_{n,t}$ are all different, hence the states $\varrho_{1},\ldots,\varrho_{k}$ must be different.
Since $t\geq 2$, the first two rows of $C^{\mathrm{opt}}_{n,t}$ are of the form
\begin{equation}\label{eq:tworow}
\tfrac{1}{n-t} \left[\begin{array}{cccc} 0 & * & * \cdots & *  \end{array}\right]
\end{equation}
Therefore, we have $\tr{\varrho_1 \Mo(1)} = \tr{\varrho_2 \Mo(1)}=0$. 
We further have $\tr{\varrho_k \Mo(1)}\neq 0$ and hence $\Mo(1)\neq 0$.
But since a nonzero singular operator on $\C^2$ has a one-dimensional kernel, this is a contradiction.
\end{proof}

Our second observation is that $C^{\mathrm{opt}}_{4,2}$ has a qutrit implementation.   

\begin{example}
\emph{The optimal communication matrix $C^{\mathrm{opt}}_{4,2}$ has a qutrit realization.}
We choose the following six pure states: 
\begin{align*}
 & \rho_{12} = \begin{bmatrix}
1 & 0 & 0 \\
0 & 0 & 0 \\
0 & 0 & 0
\end{bmatrix}, \, \rho_{13} = \begin{bmatrix}
\frac 14 & -\frac 14 & - \frac{1}{2\sqrt2} \\[0.3em]
- \frac 14 & \frac 14 & \frac{1}{2\sqrt2} \\[0.3em]
-\frac{1}{2\sqrt2} & \frac{1}{2\sqrt2} & \frac 12
\end{bmatrix}, \, \rho_{14} = \begin{bmatrix}
\frac 14 & \frac 14 & - \frac{1}{2\sqrt2} \\[0.3em]
 \frac 14 & \frac 14 & -\frac{1}{2\sqrt2} \\[0.3em]
-\frac{1}{2\sqrt2} & -\frac{1}{2\sqrt2} & \frac 12
\end{bmatrix}, \\[0.5em]
& \rho_{23} = \begin{bmatrix}
\frac 14 & \frac 14 &  \frac{1}{2\sqrt2} \\[0.3em]
 \frac 14 & \frac 14 & \frac{1}{2\sqrt2} \\[0.3em]
\frac{1}{2\sqrt2} & \frac{1}{2\sqrt2} & \frac 12
\end{bmatrix}, \, \rho_{24} = \begin{bmatrix}
\frac 14 & -\frac 14 &  \frac{1}{2\sqrt2} \\[0.3em]
- \frac 14 & \frac 14 & -\frac{1}{2\sqrt2} \\[0.3em]
\frac{1}{2\sqrt2} & -\frac{1}{2\sqrt2} & \frac 12
\end{bmatrix}, \, \rho_{34} = \begin{bmatrix}
0 & 0 & 0 \\
0 & 1 & 0 \\
0 & 0 & 0
\end{bmatrix}.
\end{align*}
The measurement $\Mo$ that leads to the communication matrix $C^{\mathrm{opt}}_{4,2}$ with the previous states is the following: 
\begin{align*}
& \Mo(1) = \begin{bmatrix}
\frac 12 & 0 & -\frac{1}{2\sqrt2} \\[0.3em]
0 & 0 & 0 \\[0.3em]
-\frac{1}{2\sqrt2} & 0 & \frac 14
\end{bmatrix}, \, \Mo(2) = \begin{bmatrix}
\frac 12 & 0 & \frac{1}{2\sqrt2} \\[0.3em]
0 & 0 & 0 \\[0.3em]
\frac{1}{2\sqrt2} & 0 & \frac 14
\end{bmatrix}, \\[0.5em]
& \Mo(3) = \begin{bmatrix}
0 & 0 & 0\\[0.3em]
0 & \frac 12 & \frac{1}{2\sqrt2} \\[0.3em]
0 & \frac{1}{2\sqrt2} & \frac 14
\end{bmatrix}, \, \Mo(4) = \begin{bmatrix}
0 & 0 & 0\\[0.3em]
0 & \frac 12 & -\frac{1}{2\sqrt2} \\[0.3em]
0 & -\frac{1}{2\sqrt2} & \frac 14
\end{bmatrix}.
\end{align*}
It is straightforward to check that $\tr{\varrho_{ij}\Mo(k)} = \frac 12$ if $k \in \{i,j\}$, and zero otherwise.
\end{example}

We conclude that the new communication tests introduced in this section, i.e. tests with $1<t<n-1$, cannot be reduced to the earlier tests. 
The communication tests have an operationally motivated hierarchy and we will discuss that in Sec. \ref{sec:ultraweak}.

\section{Communication tests using general physical systems}\label{sec:systems}

So far, we have been discussing if a certain communication matrix $C$ has a qudit implementation, meaning that $C_{ij}=\tr{\varrho_i \Mo(j)}$ for some qudit states $\varrho_i$ and qudit measurement $\Mo$.
We can equally well ask if a communication matrix $C$ has an implementation with some other physical communication medium.
It is also interesting to compare qubit with some hypothetical physical systems, such as real vector space qubit.

As a physical system we will understand anything that is specified with a convex subset $\mathcal{S}$ of a real vector space $\mathcal{V}$, the set $\mathcal{S}$ corresponding to possible states of the system.
a measurement with $n$ outcomes is an affine map $M:\mathcal{S}\to\Delta_n$, where $\Delta_n$ is the set of probability distributions $p=(p_1,\ldots,p_n)$. 
We denote by $\mathcal{M}$ the set of all measurements. 
A physical system is identified with the state space $\mathcal{S}$.
This is exactly the framework of operational probabilistic theories, extensively used in studies of quantum foundations; see e.g. \cite{KiNuIm10,MaMu11} and references therein.

We say that a communication matrix $C$ of size $k \times n$ can be implemented with a physical system $\mathcal{S}$ if there are states $s_1,\ldots,s_k$ and a measurement $M$ with the outcomes $1,\ldots,n$ such that 
$C_{ij} = [M(s_i)]_j $ for all $i,j$.
The operational dimension of $\mathcal{S}$ is, by the definition, the maximal number of distinguishable states. 
In our framework, this is to say that it is the maximal number $n$ such that the identity matrix $\id_n$ can be implemented with $\mathcal{S}$ but $\id_{n+1}$ cannot.

Given a physical system $\mathcal{S}$, an obvious task is to characterize all pairs $(n,t)$ such that the optimal communication matrix $C^{\mathrm{opt}}_{n,t}$ can be implemented with $\mathcal{S}$.
The implementable pairs can be represented in the form of a table; our earlier results on the qubit are summarized in Fig. \ref{fig:comm_tab_qubits}.
We call these communication tables of the respective physical systems.

With the following example we demonstrate that two physical systems with the same operational dimension can have different communication tables.

\begin{example}
The real vector space qubit, or rebit for short, is the two-dimensional system whose Hilbert space is defined over the real numbers. In particular, the state space of rebits is spanned by the matrices $\id$, $\sigma_x$ and $\sigma_z$. Hence, the linear span of the Bloch vectors of rebits span a two-dimensional subspace of $\mathbb{R}^3$.
The operational dimension of the real bit is two, hence $C^{\mathrm{opt}}_{2,1}$ is implementable but $C^{\mathrm{opt}}_{n,n-1}$ for $n\geq 3$ are not.
An analogous discussion as we had in the case of qubit in Sec. \ref{sec:qudit} and Prop. \ref{prop:noqubit} shows that one can implement $C^{\mathrm{opt}}_{2,1}$ and $C^{\mathrm{opt}}_{3,1}$, but not any other optimal communication matrices. 
The table summarizing the implementable communication tests of the rebit is depicted in Fig. \ref{fig:comm_tab_rebits}.
\end{example}

\begin{figure}
\centering
\begin{subfigure}{.5\textwidth}
\centering
\scalebox{1.2}{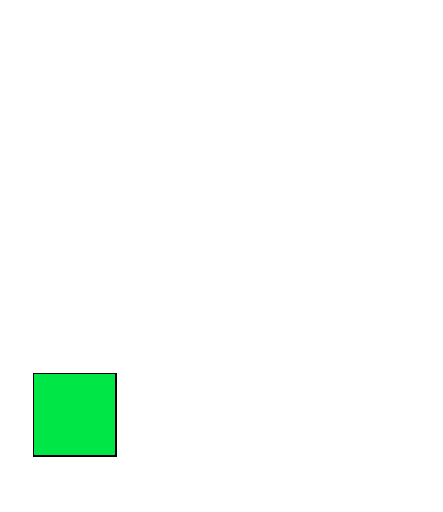}
\caption{qubit}
\label{fig:comm_tab_qubits}
\end{subfigure}%
\begin{subfigure}{.5\textwidth}
\centering
\scalebox{1.2}{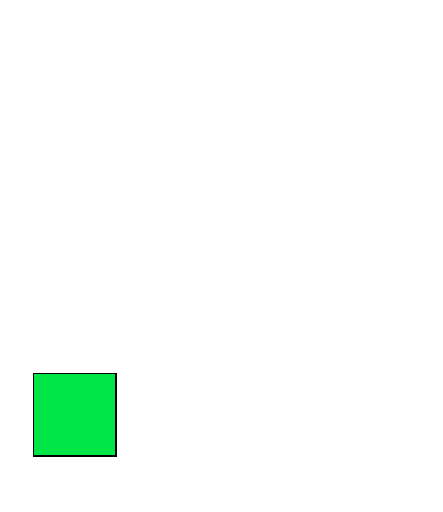}
\caption{rebit}
\label{fig:comm_tab_rebits}
\end{subfigure}
\caption{The implementable communication tests for the qubit and rebit.}
\label{fig:comm_tab}
\end{figure}

\section{Ultraweak matrix majorization}\label{sec:ultraweak}

We have introduced communication tasks and the relevant optimal communication matrices $C^{\mathrm{opt}}_{n,t}$ for $n\geq 2$ and $1\leq t \leq n-1$.
The question then arises if some of these tasks are more difficult than others. 
In this section we explain that the proper ordering of difficulty is given by the ultraweak matrix majorization, a preordering that we will introduce next.

We denote by $\M{a}{b}$ the set of $a\times b$ real matrices and by $\Mr{a}{b}$ the set of $a\times b$ row-stochastic matrices.
Further, we denote by $\Mall$ the set of all real matrices of finite size, i.e., $\Mall = \cup_{a,b} \M{a}{b}$.

\begin{definition}\label{def:majorization}
For two matrices $M\in\M{a}{b}$ and $N\in\M{c}{d}$, we denote $M\uleq N$ if there are row-stochastic matrices $L\in\Mr{a}{c}$ and $R\in\Mr{d}{b}$ such that $M=LNR$.
We then say that $M$ is \emph{ultraweakly majorized} by $N$.
We further denote $M \simeq N$ if $M\uleq N \uleq M$.
\end{definition}

This definition is related to the concepts of matrix majorization and weak matrix majorization; see \cite{Dahl99,PeMaSi05}.
Those relations are typically defined for matrices of the same size, and then the matrix majorization corresponds to having $L=\id$ in Def. \ref{def:majorization} while the weak matrix majorization corresponds to having $R=\id$.
Hence, if a matrix $M$ is majorized or weakly majorized by another matrix $N$, then $M$ is also ultraweakly majorized by $N$.

We start by listing three sufficient criteria for two matrices being ultraweakly equivalent.
The proofs are straightforward and we omit them.

\begin{proposition}\label{prop:equivalent}
In the following cases two matrices $M$ and $N$ satisfy $M\simeq N$.
\begin{itemize}
\item[(a)] $M,N\in\M{a}{b}$ and $N$ is obtained from $M$ by permuting rows and columns.
\item[(b)] $M\in\M{a}{b}$, $N\in\M{c}{b}$, $c>a$, and $N$ is obtained from $M$ by duplicating some of the rows of $M$.
\item[(c)] $M\in\M{a}{b}$, $N\in\M{a}{c}$, $c>a$, and $N$ is obtained from $M$ by adding zero columns to $M$.
\end{itemize}
\end{proposition}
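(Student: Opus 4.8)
The plan is to establish both $M\uleq N$ and $N\uleq M$ in each of the three cases by writing down explicit row-stochastic factors, and in every instance to take one of the two factors to be an identity matrix (which is row-stochastic). This reduces each verification to producing a single nontrivial stochastic matrix built out of zeros and ones.

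For (a), permuting rows and columns of $M$ amounts to $N=PMQ$ for an $a\times a$ permutation matrix $P$ and a $b\times b$ permutation matrix $Q$. Permutation matrices are row-stochastic, so $N=PMQ$ witnesses $N\uleq M$; and since $P^{-1}$ and $Q^{-1}$ are again permutation matrices (they are the transposes), $M=P^{-1}NQ^{-1}$ witnesses $M\uleq N$. Hence $M\simeq N$.

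For (b), let $f\colon\{1,\dots,c\}\to\{1,\dots,a\}$ be the surjection recording, for each row of $N$, which row of $M$ it copies. The $c\times a$ zero-one matrix $S$ carrying a single $1$ in row $i$ at column $f(i)$ is row-stochastic and satisfies $N=SM=S\,M\,\id_{b}$, so $N\uleq M$. For the converse, pick a preimage $i_{k}\in f^{-1}(k)$ for each $k$; the $a\times c$ zero-one matrix $S'$ with a single $1$ in row $k$ at column $i_{k}$ is row-stochastic and gives $M=S'\,N\,\id_{b}$, so $M\uleq N$. Case (c) is the mirror image, with the roles of the left and right factors swapped: adjoining zero columns is right multiplication by a $b\times c$ zero-one matrix $T$ having one $1$ per row, which is row-stochastic, so $N=\id_{a}\,M\,T$ yields $N\uleq M$; and for $M\uleq N$ one takes the $c\times b$ row-stochastic matrix $R$ whose row indexed by an original column carries a single $1$ in the position that column occupies inside $M$, while its rows indexed by the newly inserted zero columns are completed to arbitrary probability vectors, so that $M=\id_{a}\,N\,R$.

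I expect no genuine obstacle; the whole argument is bookkeeping with zero-one matrices. The only point deserving attention — and the reason (c) is not literally symmetric to (b) — is that row-stochasticity forces every row of $R$ to sum to $1$, so in the converse direction of (c) the rows of $R$ lying over the inserted zero columns cannot be left at zero and must be filled in with some valid stochastic row. This is harmless precisely because those rows of $R$ are multiplied only against the vanishing columns of $N$, so the arbitrary choice does not affect the product $NR=M$.
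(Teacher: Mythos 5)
Your proof is correct. The paper itself states that ``the proofs are straightforward and we omit them,'' so there is no paper proof to compare against; your argument — writing down explicit zero-one row-stochastic factors for each case, with identity on the unused side — is the natural bookkeeping one would expect, and you handle the one subtle point (in case (c), the rows of $R$ sitting over the inserted zero columns of $N$ must be padded to genuine probability vectors) correctly, observing that the padding is harmless because it only multiplies vanishing columns. One small remark: in case (c) the paper's condition ``$c>a$'' appears to be a typo for ``$c>b$'' (adding zero columns to an $a\times b$ matrix yields an $a\times c$ matrix with $c>b$); your reading of the intended statement is the correct one.
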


\begin{proposition}
For any $M\in\Mr{a}{n} \cup \Mr{n}{b}$, we have 
$$
V_n \uleq M \uleq \id_n \, , 
$$
where 
$$
V_n = \frac{1}{n} \left[\begin{array}{ccc} 1 & \cdots & 1 \\ \vdots &  \ddots & \vdots  \\ 1 & \cdots & 1\end{array}\right] \, .
$$
\end{proposition}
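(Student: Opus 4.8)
The statement has two parts depending on whether $M$ is row-stochastic with $n$ columns or with $n$ rows; I will treat the two symmetrically using the definition $M \uleq N$ iff $M = LNR$ for row-stochastic $L,R$ of the appropriate sizes.

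First I would establish $M \uleq \id_n$ for $M \in \Mr{a}{n}$. Here I want row-stochastic $L \in \Mr{a}{n}$ and $R \in \Mr{n}{n}$ with $M = L\,\id_n\,R = LR$. The obvious choice is $L = M$ and $R = \id_n$: both are row-stochastic (the latter trivially), and the product is $M$. For $M \in \Mr{n}{b}$ I instead need $L \in \Mr{n}{n}$ and $R \in \Mr{b}{n}$... wait, that has the wrong shape; here $M = L\,\id_n\,R$ forces $L \in \Mr{n}{n}$ and $R \in \Mr{n}{b}$, so I take $L = \id_n$, $R = M$. In both cases the factorization is immediate, so this direction is routine.

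Next, $V_n \uleq M$. For $M \in \Mr{n}{b}$ I need row-stochastic $L \in \Mr{n}{n}$ and $R \in \Mr{b}{n}$ with $V_n = L M R$. Take $L = V_n$ itself (it is row-stochastic) and $R$ to be any row-stochastic $b \times n$ matrix, say $R = \tfrac1n J_{b,n}$ with all entries $1/n$, or more simply $R$ with first column all ones and the rest zero. Then $LM$ has every row equal to $\tfrac1n \sum_i (\text{row } i \text{ of } M)$, a fixed probability vector $q \in \Delta_b$; multiplying on the right by a row-stochastic $R$ whose columns are constant (e.g. $R = \tfrac1n J_{b,n}$) collapses this to the matrix with every entry $\tfrac1n$, which is exactly $V_n$. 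For $M \in \Mr{a}{n}$ the roles are mirrored: take $R = V_n$ and $L = \tfrac1n J_{n,a}$ (or any row-stochastic $L$ with constant columns), so that $MR$ has all columns equal and $L(MR)$ becomes $V_n$. The key observation making this work is that left-multiplying by a matrix with constant columns, or right-multiplying by one, produces a matrix with constant entries, and the normalization $1/n$ comes out correctly because $V_n$'s rows and the ambient stochastic matrices all sum to $1$.

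The only mild subtlety — and the thing I would state carefully rather than the "main obstacle," since there really isn't a hard step — is bookkeeping the shapes: $\uleq$ is defined for matrices of arbitrary (possibly different) sizes, and one must check in each of the four cases ($M$ with $n$ columns vs. $n$ rows, times the two inequalities) that the chosen $L,R$ have dimensions compatible with the product $LNR$ and are genuinely row-stochastic (nonnegative, rows summing to $1$). All four checks are one-line verifications of the form "product of row-stochastic matrices is row-stochastic" together with an explicit computation of the resulting constant matrix, so I would present them compactly, perhaps handling $\Mr{a}{n}$ in detail and noting that $\Mr{n}{b}$ is the transpose-symmetric case.
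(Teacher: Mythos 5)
Your proposal is correct and matches the paper's argument essentially line for line: $M\uleq\id_n$ is handled by taking $(L,R)=(M,\id_n)$ or $(\id_n,M)$ according to which side carries the index $n$, and $V_n\uleq M$ is handled by sandwiching $M$ between the flat matrix $V_n$ and an all-equal row-stochastic rectangle, which is exactly the paper's $V_{n,a}$ (resp.\ $V_{b,n}$). One small slip: for $M\in\Mr{a}{n}$ you wrote $L=\tfrac1n J_{n,a}$, but that matrix has rows summing to $a/n$ and so is row-stochastic only when $a=n$; the correct normalization is $\tfrac1a J_{n,a}$ (in fact, as your own parenthetical hints, \emph{any} row-stochastic $n\times a$ matrix $L$ works once $MR$ is already the constant $1/n$ matrix, so the ``constant columns'' restriction is also unnecessary).
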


\begin{proof}
Let us first consider $M \uleq \id_n$. 
If $M \in \Mr{a}{n}$, then we can choose $R = \id_n$ and $L = M$ and now $L\id_n R = M$. 
On the other hand if $M \in \Mr{n}{b}$, then we can choose $L = \id_n$ and $R = M$. 
This proves the first part.

Let us then consider $V_n \uleq M$. 
If $M \in \Mr{a}{n}$, we can choose $L = V_{n,a}$ and $R = V_{n}$, where $V_{n,a}$ is a $n \times a $ row-stochastic matrix with all elements equal. Then $LMR = V_n$. 
Likewise for $M \in \Mr{n}{b}$, we can choose $L = V_n$ and $R = V_{b,n}$. This concludes the proof.
\end{proof}

Next we provide some nontrivial examples when ultraweak majorization holds and when it doesn't for the optimal communication matrices defined in Sec. \ref{sec:gen}.

\begin{example}\label{sec:uwm-examples}
With the following matrix equations we demonstrate that  $C^{\mathrm{opt}}_{2,1} \uleq C^{\mathrm{opt}}_{4,2}$, $C^{\mathrm{opt}}_{3,1} \uleq C^{\mathrm{opt}}_{4,2}$ and $C^{\mathrm{opt}}_{4,1} \uleq C^{\mathrm{opt}}_{4,2}$:
\begin{align}\label{eq:2142}
\left[\begin{array}{cccccc}1 & 0 & 0 & 0 & 0 & 0 \\
0 & 0 & 0 & 0 & 0 & 1 \\
\end{array}\right]\frac12 \left[\begin{array}{cccc} 0 & 0 & 1 & 1\\ 0 & 1 & 0 & 1\\ 0 & 1 & 1 & 0\\ 1 & 0 & 0 & 1\\ 1 & 0 & 1 & 0\\ 1 & 1 & 0 & 0  \end{array}\right] \left[\begin{array}{cc} 1 & 0 \\ 1 & 0 \\ 0 & 1 \\ 0 & 1\end{array}\right] =  \left[ \begin{array}{cc}
0 & 1 \\ 1 & 0 
\end{array}\right] \, ,
\end{align}

\begin{align}
\left[\begin{array}{cccccc}1 & 0 & 0 & 0 & 0 & 0 \\
0 & 1 & 0 & 0 & 0 & 0 \\
0 & 0 & 1 & 0 & 0 & 0
\end{array}\right]\frac12 \left[\begin{array}{cccc} 0 & 0 & 1 & 1\\ 0 & 1 & 0 & 1\\ 0 & 1 & 1 & 0\\ 1 & 0 & 0 & 1\\ 1 & 0 & 1 & 0\\ 1 & 1 & 0 & 0  \end{array}\right] \left[\begin{array}{ccc} 1 & 0 & 0 \\ 1 & 0 & 0 \\ 0 & 1 & 0 \\ 0 & 0 & 1  \end{array}\right] = \frac12 \left[ \begin{array}{ccc}
0 & 1 & 1 \\ 1 & 0 & 1 \\ 1 & 1 & 0
\end{array}\right] \, ,
\end{align}

\begin{align}
\frac{1}{3} \left[\begin{array}{cccccc} 1 & 1 & 1 & 0 & 0 & 0\\ 1 & 0 & 0 & 1 & 1 & 0\\ 0 & 1 & 0 & 1 & 0 & 1\\ 0 & 0 & 1 & 0 & 1 & 1 \end{array}\right]\frac{1}{2} \left[\begin{array}{cccc} 0 & 0 & 1 & 1\\ 0 & 1 & 0 & 1\\ 0 & 1 & 1 & 0\\ 1 & 0 & 0 & 1\\ 1 & 0 & 1 & 0\\ 1 & 1 & 0 & 0  \end{array}\right] \id_4 = \frac{1}{3} \left[\begin{array}{cccc} 0 & 1 & 1 & 1\\ 1 & 0 & 1 & 1\\ 1 & 1 & 0 & 1\\ 1 & 1 & 1 & 0  \end{array}\right] \, .
\end{align}
\end{example}

\begin{example}\label{ex:2131}
We have both $C^{\mathrm{opt}}_{2,1} \nuleq C^{\mathrm{opt}}_{3,1}$ and  $C^{\mathrm{opt}}_{3,1} \nuleq C^{\mathrm{opt}}_{2,1}$.
Firstly, observe that $\rank{C^{\mathrm{opt}}_{3,1}} = 3 $ and $\rank{C^{\mathrm{opt}}_{2,1}} = 2 $, hence we cannot have $C^{\mathrm{opt}}_{3,1} \uleq C^{\mathrm{opt}}_{2,1}$ as the matrix rank cannot increase in matrix multiplication. 
On the other hand, suppose that there are matrices $L$ and $R$ such that
 \begin{align*}
C^{\mathrm{opt}}_{2,1}=L C^{\mathrm{opt}}_{3,1} R &= \begin{bmatrix}
a_{11} & a_{12} & a_{13} \\ a_{21} & a_{22} & a_{23}
\end{bmatrix}\frac 12 \begin{bmatrix}
0 & 1 & 1 \\ 1 & 0 & 1\\1 & 1 & 0
\end{bmatrix}\begin{bmatrix}
b_{11} & b_{12} \\b_{21} & b_{22} \\b_{31} & b_{32} \\
\end{bmatrix} 
\end{align*}
This matrix equation reduces to the following set of equations: \begin{align*}
a_{11}(b_{21}+b_{31})+a_{12}(b_{11}+b_{31})+a_{13}(b_{11}+b_{21}) &= 0 \\
a_{11}(b_{22}+b_{32})+a_{12}(b_{12}+b_{32})+a_{13}(b_{12}+b_{22}) &= 2 \\
a_{21}(b_{21}+b_{31})+a_{22}(b_{11}+b_{31})+a_{23}(b_{11}+b_{21}) &= 2 \\
a_{21}(b_{22}+b_{32})+a_{22}(b_{12}+b_{32})+a_{23}(b_{12}+b_{22}) &= 0 .
\end{align*}
It is straightforward to check that this set of equations does not have a solution where both $L$ and $R$ are row-stochastic. For instance, if $a_{11}=a_{12} = 0$, then $b_{11}=b_{21}=0$ and $a_{13}=1$. It follows that $b_{12}=b_{22} = 1$ and $(a_{21}+a_{22})b_{31}=2$, which is impossible. If $a_{11} = 0 $ while $a_{12},a_{13} \neq 0$, then $b_{11} = b_{21} = b_{31} = 0 $, which contradicts the third equation. Moreover, $a_{11}$, $a_{12}$ and $a_{13}$ cannot all be non-zero for the same reason. Similar observations for other choices of zeros for the first row of $L$ show that there is no choice that works, and therefore $C^{\mathrm{opt}}_{2,1} \nuleq C^{\mathrm{opt}}_{3,1}$.
\end{example}

It is straightforward to see that the ultraweak majorization is a reflexive and transitive relation on $\Mall$, hence a preorder. 
Before going into mathematical properties of the ultraweak majorization, we explain its connection to the communication tasks of the earlier sections.

\begin{proposition}\label{prop:processing}
Let $C\in\Mr{a}{b}$ be a row-stochastic matrix that has an implementation with a physical system $\mathcal{S}$ (in the sense explained in Sec. \ref{sec:systems}). 
Then any row-stochastic matrix $C'\in\Mr{c}{d}$ satisfying $C'\uleq C$ has also an implementation with $\mathcal{S}$.
\end{proposition}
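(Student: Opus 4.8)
The plan is to unfold both definitions and then read off an implementation of $C'$ directly. Write a given implementation of $C$ as a family of states $s_1,\ldots,s_a\in\mathcal{S}$ together with a measurement $M\colon\mathcal{S}\to\Delta_b$ such that $C_{ij}=[M(s_i)]_j$ for all $i,j$. Since $C'\uleq C$, fix row-stochastic matrices $L\in\Mr{c}{a}$ and $R\in\Mr{b}{d}$ with $C'=LCR$. The matrix $L$ will be used to form mixtures of the encoding states $s_i$, and the matrix $R$ will be used to classically post-process the outcomes of the decoding measurement $M$.

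Concretely, for each $k=1,\ldots,c$ I set $s'_k=\sum_{i=1}^a L_{ki}\,s_i$. Because $L$ is row-stochastic, each $s'_k$ is a convex combination of elements of $\mathcal{S}$, hence lies in $\mathcal{S}$ by convexity of the state space. For the measurement, I define $M'\colon\mathcal{S}\to\R^d$ by $[M'(s)]_l=\sum_{j=1}^b R_{jl}\,[M(s)]_j$ for $l=1,\ldots,d$. One then checks that $M'$ is a genuine measurement: it is affine because $M$ is affine and $M'$ is a fixed linear image of $M$; its values are nonnegative because $R$ and $M$ have nonnegative entries; and it is normalized since, using that $R$ is row-stochastic, $\sum_{l=1}^d[M'(s)]_l=\sum_{j=1}^b\bigl(\sum_{l=1}^d R_{jl}\bigr)[M(s)]_j=\sum_{j=1}^b[M(s)]_j=1$. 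Hence $M'$ is an affine map $\mathcal{S}\to\Delta_d$, i.e. a measurement with outcomes $1,\ldots,d$.

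It remains to verify that the pair $(\{s'_k\},M')$ implements $C'$. Using affinity of $M$ and the definitions, $[M'(s'_k)]_l=\sum_j R_{jl}[M(s'_k)]_j=\sum_j R_{jl}\sum_i L_{ki}[M(s_i)]_j=\sum_{i,j}L_{ki}C_{ij}R_{jl}=[LCR]_{kl}=C'_{kl}$, which is exactly the claim. (The hypothesis that $C'$ is row-stochastic is automatic here, since $L$, $C$, $R$ are all row-stochastic, so it plays no role.) I do not expect any genuine obstacle: the whole content is the observation that a row-stochastic matrix applied on the left of a communication matrix amounts to mixing encoding states, while a row-stochastic matrix applied on the right amounts to post-processing the decoding measurement, and both operations stay within the operational framework attached to $\mathcal{S}$. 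The only point that needs an explicit check is that the post-processed $M'$ is still normalized, and this is immediate from the row-stochasticity of $R$.
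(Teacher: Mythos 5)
Your proof is correct and takes essentially the same approach as the paper: you use $L$ to mix the encoding states and $R$ to post-process the measurement outcomes, exactly as the paper does. The only difference is that you spell out the routine checks (affinity, nonnegativity, normalization of $M'$) that the paper leaves implicit.
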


\begin{proof}
The assumption that $C$ can be implemented with a physical system $\mathcal{S}$ means that there are states $s_1,\ldots,s_a$ and a measurement $M$ with the outcomes $1,\ldots,b$ such that 
$C_{ij} = [M(s_i)]_j $ for all $i,j$.
Suppose that $C'\in\Mr{c}{d}$ satisfies $C'\uleq C$.
This means that there exist $L\in\Mr{c}{a}$ and $R\in\Mr{b}{d}$ such that $C'=LCR$.
We define new states $s'_1,\ldots,s'_c$ as
\begin{align*}
s'_p = \sum_{i=1}^a L_{pi} s_i
\end{align*}
and a new measurement $M'$ as
\begin{align*}
[M'(s)]_q = \sum_{j=1}^b R_{jq} [M(s)]_j \, .
\end{align*}
Then $[M'(s'_p)]_q = C'_{pq}$.
\end{proof}

To illustrate the content of Prop. \ref{prop:processing} and its proof, we consider the relation $C^{\mathrm{opt}}_{2,1} \uleq C^{\mathrm{opt}}_{4,2}$ that was shown in \eqref{eq:2142}.
Suppose that we have an implementation for $C^{\mathrm{opt}}_{4,2}$.
The matrix $L$ corresponds to a relabeling of the original states and in this specific case it means that we choose $s'_1=s_1$ and $s'_2=s_6$, never using the other states. 
The matrix $R$ corresponds to a post-processing of measuring outcomes and in this specific case it means that the outcomes $1$ and $2$ are interpreted as $1$ while the outcomes $3$ and $4$ are interpreted as $2$.
With these relabelings the original implementation of $C^{\mathrm{opt}}_{4,2}$ becomes an implementation of $C^{\mathrm{opt}}_{2,1}$.
Generally, relabelings on both sides can be stochastic and an illustration of this kind of scenario is given in Fig. \ref{fig:ultraweak}.

\begin{figure}
\includegraphics[width=8cm]{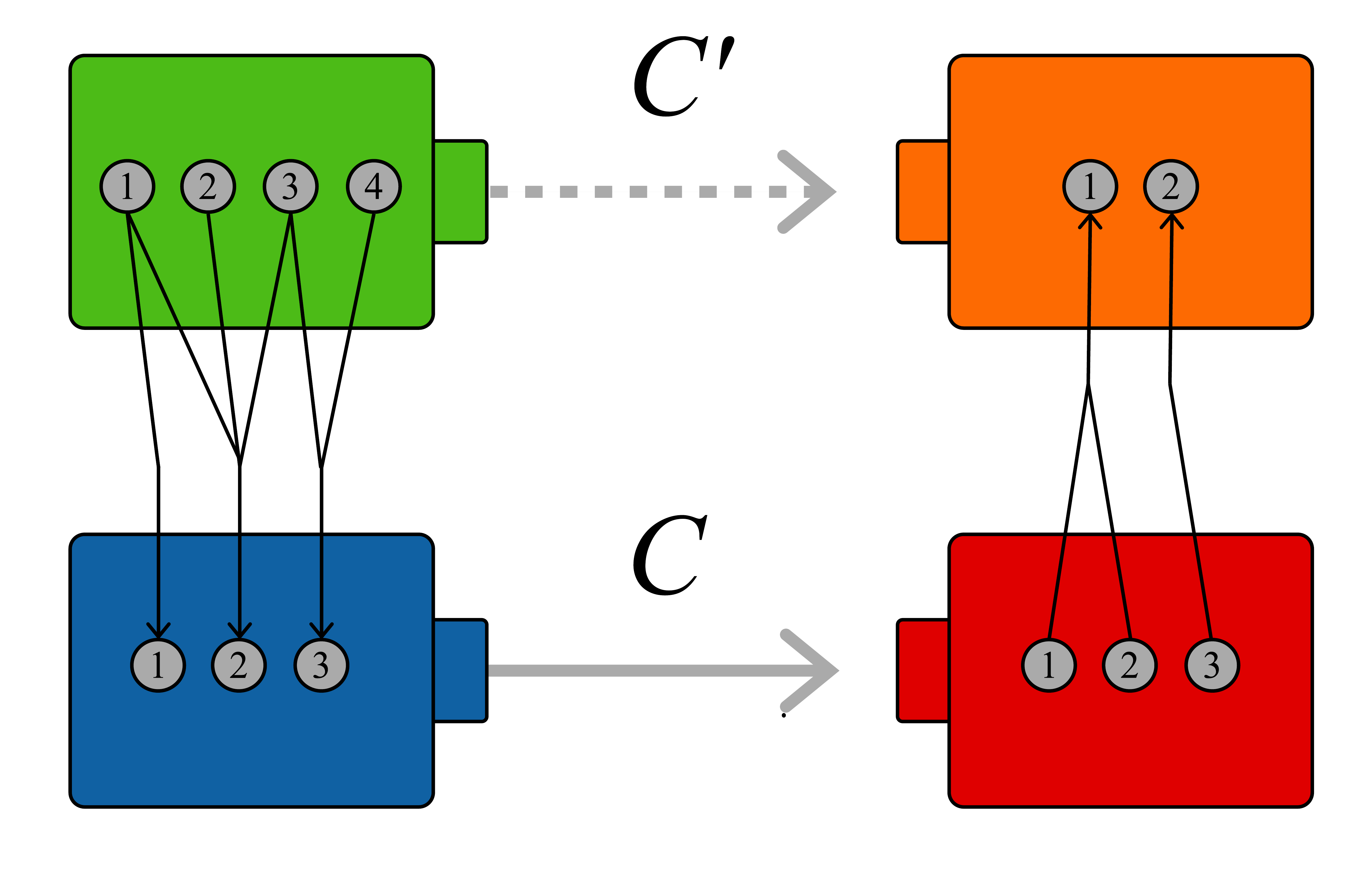}
\caption{\label{fig:ultraweak}If a matrix $C$ has an implementation with a physical system $\mathcal{S}$ and if a matrix $C'$ is ultraweakly majorized by $C$, then also $C'$ has an implementation with the system $\mathcal{S}$ as it can be obtained from the setup of $C$ by mixing and relabeling of states and measurement outcomes.}
\end{figure}

\begin{example}
By permuting the rows (or columns) of the antidiagonal matrix $C^{\mathrm{opt}}_{n,n-1}$ one can obtain the identity matrix $\id_n$.
Thereby, $C^{\mathrm{opt}}_{n,n-1} \simeq \id_n$.
This is consistent with the fact that if we have $n$ boxes and we know $n-1$ boxes where the candy is not going to be, then we actually know where the candy will be.
\end{example}

The following three results give conditions when two optimal communication matrices are in the ultraweak majorization relation.  

\begin{proposition}\label{prop:rank}
A necessary condition for $C^{\mathrm{opt}}_{n,t}\uleq C^{\mathrm{opt}}_{n',t'}$ is that $n\leq n'$.
\end{proposition}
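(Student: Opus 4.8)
The plan is to reduce the statement to a rank computation, exactly as in Example~\ref{ex:2131}. Writing $C^{\mathrm{opt}}_{n,t} = L\, C^{\mathrm{opt}}_{n',t'}\, R$ for row-stochastic $L$ and $R$, and using that the rank of a matrix cannot increase under multiplication, I get $\rank{C^{\mathrm{opt}}_{n,t}} \le \rank{C^{\mathrm{opt}}_{n',t'}}$. So it suffices to prove that $\rank{C^{\mathrm{opt}}_{n,t}} = n$ whenever $1 \le t \le n-1$; the inequality $n \le n'$ then follows immediately by applying this identity to both $(n,t)$ and $(n',t')$.

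To pin down the rank I would first unwind the definition of $C^{\mathrm{opt}}_{n,t}$ from Section~\ref{sec:gen}: it is the $\binom{n}{t}\times n$ matrix whose rows are indexed by the $t$-element subsets $S \subseteq \{1,\dots,n\}$, the entry in row $S$ and column $j$ being $\tfrac{1}{n-t}$ when $j \notin S$ and $0$ when $j \in S$. Since this matrix has only $n$ columns, $\rank{C^{\mathrm{opt}}_{n,t}} \le n$ is automatic, so the real content is that the $n$ columns are linearly independent.

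That I would establish by an elementary exchange argument. Suppose $\sum_{j=1}^{n} c_j\,(\text{column }j) = 0$ for scalars $c_1,\dots,c_n$. Reading off the coordinate indexed by a $t$-subset $S$ gives $\sum_{j \notin S} c_j = 0$, i.e.\ $\sum_{j \in S} c_j = \gamma$ for every $t$-subset $S$, where $\gamma := \sum_{j=1}^{n} c_j$. Because $1 \le t \le n-1$, any two indices $a,b$ sit in $t$-subsets $S = T \cup \{a\}$ and $S' = T \cup \{b\}$ differing only in that one element; subtracting the corresponding relations yields $c_a = c_b$. Hence all $c_j$ equal a common value $c$, and then $t c = \gamma = n c$ forces $(n-t) c = 0$, so $c = 0$. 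Thus the columns are independent and $\rank{C^{\mathrm{opt}}_{n,t}} = n$; combining with the rank inequality above gives $n \le n'$.

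I do not anticipate a genuine obstacle here. The only step needing care is the exchange argument, which uses precisely the standing hypothesis $1 \le t \le n-1$ to guarantee a pair of $t$-subsets differing in a single element (this is where the two endpoint cases would otherwise misbehave). The remaining ingredient, that $\rank{LNR} \le \rank{N}$, is standard linear algebra and needs no comment beyond citing it.
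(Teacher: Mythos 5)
Your proof is correct, and it is essentially the same argument as the paper's, phrased dually: the paper shows the orthogonal complement of the row space of $C^{\mathrm{opt}}_{n,t}$ is trivial, while you show the columns are linearly independent — but since a row of $C^{\mathrm{opt}}_{n,t}$ indexed by a $t$-subset $S$ has entries $\tfrac{1}{n-t}\,\mathbbm{1}[j\notin S]$, both reduce to showing that $\sum_{j\notin S}c_j=0$ for all $S$ forces $c=0$, and both accomplish this with the same exchange-of-a-single-index argument. The only cosmetic difference is in the last step: the paper concludes $c=0$ from row-stochasticity (each row has unit sum), while you derive $(n-t)c=0$ from the two ways of evaluating $\gamma$.
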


\begin{proof}
It is enough to show that $\rank{C^{\mathrm{opt}}_{n,t}}=n$ as the matrix rank cannot increase in the matrix multiplication. 
Let $\mathcal{V}\subseteq\R^n$ be the space spanned by the rows of $C^{\mathrm{opt}}_{n,t}$ and denote by $\mathcal{V}^\bot$ the orthogonal complement of $\mathcal{V}$.
We first make an assumption that there exists $u\in\mathcal{V}^\bot$ such that $u_i \neq u_j$ for some indices $i < j$. 
We can always find a row $v$ in $C^{\mathrm{opt}}_{n,t}$ for which $v_i = 0$ and $v_j > 0$. 
Further, $C^{\mathrm{opt}}_{n,t}$ also contains a row $w$ for which $w_i = v_j$, $w_j=v_i$, and the other components are the same. 
We then obtain
$$
0 = u \cdot v - u \cdot w = u \cdot (v-w) = u_i(v_i-w_i) + u_j(v_j-w_j) = v_j(u_j-u_i) \neq 0 \, ,
$$
which is a contradiction.
Therefore, contrary to our assumption every $u\in\mathcal{V}^\bot$ is of the form $u=s (1,\dots,1)$ for some $s\in \mathbb{R}$. Moreover, because all rows of $C^{\mathrm{opt}}_{n,t}$ only contain non-negative components, we must have $s=0$. 
We conclude that $\mathcal{V}^\bot=\{0\}$ and hence $\rank{C^{\mathrm{opt}}_{n,t}}=\dim \mathcal{V}=n$.
\end{proof}

Our earlier finding $C^{\mathrm{opt}}_{3,1} \nuleq C^{\mathrm{opt}}_{2,1}$ in Example \ref{ex:2131} shows that the condition stated in Prop. \ref{prop:rank} is not sufficient.
The following results generalizes the earlier observation  $C^{\mathrm{opt}}_{2,1} \uleq C^{\mathrm{opt}}_{4,2}$ in Example \ref{sec:uwm-examples} by provinding a sufficient condition for the ultraweak majorization.

\begin{proposition}
A sufficient condition for $C^{\mathrm{opt}}_{m,m-1}\uleq C^{\mathrm{opt}}_{n,t}$ is that $\lfloor \tfrac{n}{n-t} \rfloor\geq m$.
\end{proposition}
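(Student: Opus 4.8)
The plan is to reduce the statement to the single relation $\id_m\uleq C^{\mathrm{opt}}_{n,t}$. Indeed, $C^{\mathrm{opt}}_{m,m-1}$ is an $m\times m$ matrix whose rows run over all distinct permutations of $(0,\dots,0,1)$, hence a permutation matrix, so $C^{\mathrm{opt}}_{m,m-1}\simeq\id_m$ by Proposition \ref{prop:equivalent}(a). Since $\uleq$ is transitive, it then suffices to produce row-stochastic matrices $L\in\Mr{m}{\binom{n}{t}}$ and $R\in\Mr{n}{m}$ with $L\,C^{\mathrm{opt}}_{n,t}\,R=\id_m$.

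The combinatorial heart of the argument is that the hypothesis $\lfloor n/(n-t)\rfloor\geq m$ is exactly the condition $m(n-t)\leq n$, which allows me to split the column index set $\{1,\dots,n\}$ into $m$ pairwise disjoint blocks $S_1,\dots,S_m$, each of size $n-t$, with a (possibly empty) leftover set $S_0$ of size $n-m(n-t)$. I would then take $R$ to be the $0/1$ matrix sending every column index in $S_p$ to output $p$, and every leftover index to output $1$; each of the $n$ rows of $R$ has a single $1$, so $R$ is row-stochastic. For $L$, I use the structure of $C^{\mathrm{opt}}_{n,t}$: each of its rows is $\tfrac{1}{n-t}$ times the indicator of an $(n-t)$-element subset of $\{1,\dots,n\}$, and since the rows are precisely all distinct such permutations, every $(n-t)$-subset is the support of exactly one row. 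Hence for each $p$ there is a unique row index $i_p$ with support exactly $S_p$, the $i_p$ are distinct, and I let $L$ be the $0/1$ matrix whose $p$-th row has its single $1$ in column $i_p$; this $L$ is row-stochastic.

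It remains to verify the product. The $(i_p,q)$ entry of $C^{\mathrm{opt}}_{n,t}R$ is $\tfrac{1}{n-t}\,|S_p\cap S_q|$, which equals $1$ when $q=p$ (as $|S_p|=n-t$) and $0$ when $q\neq p$ (disjointness), the leftover columns contributing nothing because $S_p\cap S_0=\emptyset$. Multiplying on the left by $L$ just selects the rows $i_1,\dots,i_m$, so $L\,C^{\mathrm{opt}}_{n,t}\,R=\id_m$, giving $\id_m\uleq C^{\mathrm{opt}}_{n,t}$ and therefore $C^{\mathrm{opt}}_{m,m-1}\uleq C^{\mathrm{opt}}_{n,t}$.

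I do not expect a genuine obstacle here: once the block decomposition exists the construction is routine and the verification is a one-line computation. The only place where the precise hypothesis (rather than something weaker) is used is the existence of $m$ disjoint $(n-t)$-blocks inside a set of size $n$, i.e. $m(n-t)\leq n$, equivalently $m\leq\lfloor n/(n-t)\rfloor$; and the only structural fact invoked about $C^{\mathrm{opt}}_{n,t}$ is that every $(n-t)$-subset appears as the support of some row, which is immediate from its definition.
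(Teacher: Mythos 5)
Your proof is correct and follows essentially the same strategy as the paper's: pick rows of $C^{\mathrm{opt}}_{n,t}$ with pairwise disjoint supports (possible precisely because $m(n-t)\leq n$) via a $0/1$ selection matrix $L$, and aggregate columns via a $0/1$ matrix $R$. The paper builds $C^{\mathrm{opt}}_{k,k-1}$ for $k=\lfloor n/(n-t)\rfloor$ and then discards zero columns, whereas you target $\id_m\simeq C^{\mathrm{opt}}_{m,m-1}$ directly and absorb the leftover columns into block $1$, which is slightly cleaner bookkeeping but the same underlying combinatorial idea.
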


\begin{proof}
The first row of $C^{\mathrm{opt}}_{n,t}$ is the vector written in \eqref{eq:firstrow}. As every distinct permutation of this vector appears in $C^{\mathrm{opt}}_{m,t}$, we can find $k := \lfloor \tfrac{m}{m-t} \rfloor$ rows so that each of them has nonzero elements only in locations where the other rows have zeros. 
For instance, in the case of $n=10$, $t=7$, we choose the following rows:
\begin{align*}
\tfrac{1}{3} \left[\begin{array}{cccccccccc} 0 & 0 & 0 & 0 & 0 & 0 & 0 & 1 & 1 & 1  \end{array}\right] \, , \\
\tfrac{1}{3} \left[\begin{array}{cccccccccc} 0 & 0 & 0 & 0 & 1 & 1 & 1 & 0 & 0 & 0  \end{array}\right] \, , \\
\tfrac{1}{3} \left[\begin{array}{cccccccccc} 0 & 1 & 1 & 1 & 0 & 0 & 0 & 0 & 0 & 0  \end{array}\right] \, .
\end{align*}
It is easy to see that we can choose a matrix $L$ in such away that $LC^{\mathrm{opt}}_{n,t}$ has only the previously chosen $k$ rows but is otherwise unchanged. 
Then we choose a matrix $R$ such that similar columns are added together. 
Finally, by Prop. \ref{prop:equivalent} we can remove zero columns if there are any.
In this way, we have obtained $C^{\mathrm{opt}}_{k,k-1}$.
\end{proof}

\begin{theorem}\label{diagonal proof}
$C^{\mathrm{opt}}_{n,t-1} \uleq C^{\mathrm{opt}}_{n,t}\uleq C^{\mathrm{opt}}_{n+1,t+1}$.
\end{theorem}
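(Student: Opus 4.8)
The plan is to prove each of the two ultraweak majorizations directly, by exhibiting explicit row-stochastic matrices $L$ and $R$ realizing the required factorization $M=LNR$. Throughout I index the rows of $C^{\mathrm{opt}}_{n,t}$ by the $t$-element subsets $S\subseteq\{1,\dots,n\}$ (the set of boxes Charlie declares empty), so that the row labelled $S$ is $\tfrac{1}{n-t}\,\mathbf{1}_{\{1,\dots,n\}\setminus S}$; by Prop.~\ref{prop:equivalent}(a) the particular order in which the rows are listed is immaterial.

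For the first inequality $C^{\mathrm{opt}}_{n,t-1}\uleq C^{\mathrm{opt}}_{n,t}$ I would take $R=\id_n$ and let $L$ be the $\binom{n}{t-1}\times\binom{n}{t}$ matrix with $L_{T,S}=\tfrac{1}{n-t+1}$ whenever $T\subseteq S$ and $L_{T,S}=0$ otherwise. Since there are exactly $n-(t-1)=n-t+1$ sets $S$ with $T\subseteq S$, each row of $L$ sums to $1$, so $L$ is row-stochastic. The only computation is checking $LC^{\mathrm{opt}}_{n,t}=C^{\mathrm{opt}}_{n,t-1}$: for a fixed $(t-1)$-set $T$ and a column $j\notin T$ there are exactly $n-t$ of the $t$-sets $S\supseteq T$ with $j\notin S$, so the $(T,j)$ entry of the product equals $\tfrac{1}{n-t+1}\cdot(n-t)\cdot\tfrac{1}{n-t}=\tfrac{1}{n-(t-1)}$, which is the correct value; and for $j\in T$ every such $S$ contains $j$, so the entry is $0$. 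Operationally this is just Bob averaging the optimal strategies for all finer reveals $S$ consistent with the coarser reveal $T$.

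For the second inequality $C^{\mathrm{opt}}_{n,t}\uleq C^{\mathrm{opt}}_{n+1,t+1}$ I would embed $\{1,\dots,n\}$ into $\{1,\dots,n+1\}$ via the injection $S\mapsto S\cup\{n+1\}$ from $t$-subsets of $\{1,\dots,n\}$ to $(t+1)$-subsets of $\{1,\dots,n+1\}$. Let $L$ be the $0/1$ matrix that, for each row labelled $S$, selects the row of $C^{\mathrm{opt}}_{n+1,t+1}$ labelled $S\cup\{n+1\}$ (one $1$ per row, hence row-stochastic), and let $R\in\Mr{n+1}{n}$ have first $n$ rows equal to $e_1,\dots,e_n$ and last row any stochastic vector, say $e_1$. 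The selected row of $C^{\mathrm{opt}}_{n+1,t+1}$ is supported on $\{1,\dots,n\}\setminus S$, so in particular it vanishes in column $n+1$; hence $LC^{\mathrm{opt}}_{n+1,t+1}R$ simply discards that zero column and reproduces $C^{\mathrm{opt}}_{n,t}$ exactly. (This step is essentially row selection composed with the zero-column removal of Prop.~\ref{prop:equivalent}(c).)

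Neither step poses a genuine obstacle; the only thing requiring care is the bookkeeping with subset-indexed rows and the elementary count in the first step. One should also fix the boundary convention: if the statement is wanted for $t=1$, then $C^{\mathrm{opt}}_{n,0}$ is to be read as the single row $\tfrac1n(1\ \cdots\ 1)$, and the same construction (with $L=\tfrac1n(1\ \cdots\ 1)$) still applies.
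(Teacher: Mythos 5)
Your proposal is correct and essentially identical to the paper's proof: for $C^{\mathrm{opt}}_{n,t-1}\uleq C^{\mathrm{opt}}_{n,t}$ the paper also takes $R=\id_n$ and builds the same averaging matrix $L$ (their condition $\mathrm{row}_i(C^{\mathrm{opt}}_{n,t-1})\cdot \mathrm{row}_k(C^{\mathrm{opt}}_{n,t})=\tfrac{1}{n-t+1}$ is equivalent to your $T\subseteq S$), and for $C^{\mathrm{opt}}_{n,t}\uleq C^{\mathrm{opt}}_{n+1,t+1}$ the paper likewise uses $L$ to select the rows whose extra reveal is a fixed index and $R$ to drop the resulting zero column. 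Your subset-indexed bookkeeping is somewhat cleaner than the paper's (which in fact contains a dimension typo in describing $L$, writing $\binom{n-1}{t-1}$ where $\binom{n}{t}$ is meant), but the decompositions are the same.
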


\begin{proof}
We first prove $C^{\mathrm{opt}}_{n,t}\uleq C^{\mathrm{opt}}_{n+1,t+1}$.

Firstly, observe that $C^{\mathrm{opt}}_{n+1,t+1}$ is the matrix whose rows are obtained by permuting the row vector 
$$
\tfrac{1}{n-t}\left[\begin{array}{cccccccc}0 & 0 & \cdots & 0 & 1 & 1 & \cdots & 1
\end{array}\right]$$
where there are $t+1$ zeros and $n-t$ ones. 
Similarly, $C^{\mathrm{opt}}_{n,t}$ is obtained by permuting the similar vector as above except for the fact that the first row vector has $t$ zeros. 
As the order of the rows of an optimal communication matrix does not matter (see Prop. \ref{prop:equivalent}), we can write $C^{\mathrm{opt}}_{n+1,t+1}$ in such a way that it contains $C^{\mathrm{opt}}_{n,t}$ as a submatrix in the top right corner. 

We construct $L$ as the ${n-1 \choose t-1} \times {n \choose t}$ matrix with a ${n-1 \choose t-1} \times {n-1 \choose t-1}$ identity matrix as a submatrix on the left. Then $L C^{\mathrm{opt}}_{n+1,t+1}$ will just be the matrix which contains the first ${n-1 \choose t-1}$ rows of $C^{\mathrm{opt}}_{n+1,t+1}$. 
The first column of this matrix is all zeros.

For $R$ we can choose the $n \times (n-1)$ matrix with a $(n-1) \times (n-1)$ identity matrix as a submatrix in the bottom. As the first column vector of $L C^{\mathrm{opt}}_{n+1,t+1}$ was all zeros, the first row of $R$ can be freely chosen. We can, for instance, put a one as the first element. Now $L C^{\mathrm{opt}}_{n+1,t+1} R = C^{\mathrm{opt}}_{n,t}$ by the construction of $L$ and $R$.
We have thus shown that $C^{\mathrm{opt}}_{n,t}\uleq C^{\mathrm{opt}}_{n+1,t+1}$.

For the relation $C^{\mathrm{opt}}_{n,t-1} \uleq C^{\mathrm{opt}}_{n,t}$, we begin by showing that the rows of $C^{\mathrm{opt}}_{n,t-1}$ belong to the convex hull of the rows of $C^{\mathrm{opt}}_{n,t}$. The matrix $C^{\mathrm{opt}}_{n,t}$ consists of all rows obtained by permuting the vector 
$$
\tfrac{1}{n-t}\left[\begin{array}{cccccccc}0 & 0 & \cdots & 0 & 1 & 1 & \cdots & 1
\end{array}\right] \, ,
$$
where there are $t$ zeros and $n-t$ ones. Likewise, the matrix $C^{\mathrm{opt}}_{n,t-1}$ is obtained by all permutations of the vector 
$$
\tfrac{1}{n-t+1}\left[\begin{array}{cccccccc}0 & 0 & \cdots & 0 & 1 & 1 & \cdots & 1
\end{array}\right] \, , 
$$
where there are $t-1$ zeros and $n-t+1$ ones. In particular, we can form any row of  $C^{\mathrm{opt}}_{n,t-1}$ by summing up those rows of $C^{\mathrm{opt}}_{n,t}$ that have zeros in the same places as the row in question, all multiplied with the factor $1/(n-t+1)$. 
There are always $n-t+1$ of these rows. Hence, the rows of $C^{\mathrm{opt}}_{n,t-1}$ belong to the convex hull of the rows of $C^{\mathrm{opt}}_{n,t}$.
It follows from \cite[Prop. 3.3.]{PeMaSi05} that there exists a row-stochastic matrix $L$ such that $LC^{\mathrm{opt}}_{n,t}=C^{\mathrm{opt}}_{n,t-1}$.
Therefore, $C^{\mathrm{opt}}_{n,t-1} \uleq C^{\mathrm{opt}}_{n,t}$.

For completeness, we construct a matrix $L$ that fulfils $LC^{\mathrm{opt}}_{n,t} = C^{\mathrm{opt}}_{n,t-1}$. 
We denote by $row_i(C)$ the $i$th row of a matrix $C$.
Let us consider the following equation: \begin{align*}
row_i(C^{\mathrm{opt}}_{n,t-1}) = \sum_{k=1}^{{n \choose t}} l_{ik} \ row_k(C^{\mathrm{opt}}_{n,t}), \quad i = 1,\dots, {n \choose t - 1},
\end{align*}where \begin{equation*}
l_{ik} = \left \{ \begin{aligned}
& \tfrac{1}{n-t+1}, && \text{if } row_i(C^{\mathrm{opt}}_{n,t-1})\cdot row_k(C^{\mathrm{opt}}_{n,t}) = \tfrac{1}{n-t+1} \\
& 0, && \text{otherwise} 
\end{aligned} \right.
\end{equation*}Notice that $\frac{1}{n-t+1}$ is the maximal value that the inner product $row_i(C^{\mathrm{opt}}_{n,t-1})\cdot row_k(C^{\mathrm{opt}}_{n,t})$ can have, and in this case the rows $row_i(C^{\mathrm{opt}}_{n,t-1})$ and $row_k(C^{\mathrm{opt}}_{n,t})$  share the maximal number of zeros. Because there are $n-t+1$ of these rows, the matrix $L$ is row-stochastic as constructed and fulfils $LC^{\mathrm{opt}}_{n,t} = C^{\mathrm{opt}}_{n,t-1}$.
\end{proof}

As discussed in Sec. \ref{sec:ultraweak}, the main question related to the present investigation is to characterize all pairs $(n,t)$ such that the optimal communication matrix $C^{\mathrm{opt}}_{n,t}$ can be implemented with a physical system $\mathcal{S}$.
We have answered this question for the qubit and rebit, while we managed to provide some partial answers for the qudit.
The previous results imply that some tables describing implementable communication tests are impossible.
Two such examples are presented in Fig. \ref{fig:comm_tab_imp}.
The full characterization of the ultraweak preordering of the optimal communication matrices $C^{\mathrm{opt}}_{n,t}$ remains an open question.

\begin{figure}
\centering
\begin{subfigure}{.5\textwidth}
\centering
\scalebox{1.2}{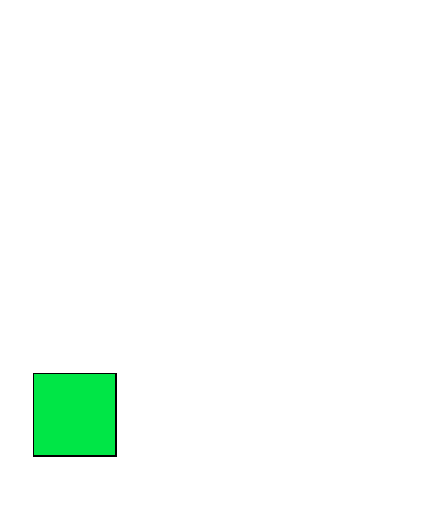}
\end{subfigure}%
\begin{subfigure}{.5\textwidth}
\centering
\scalebox{1.2}{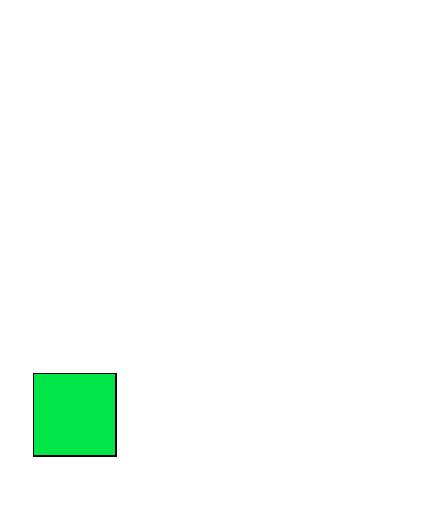}
\end{subfigure}
\caption{Two communication tables that by Theorem \ref{diagonal proof} cannot be related to any physical system.}
\label{fig:comm_tab_imp}
\end{figure}

\section{Acknowledgement}

This work was performed as part of the Academy of Finland Centre of Excellence program, Project 312058.
Financial support from the Academy of Finland, Project 287750, is also acknowledged.


\end{document}